\newtheorem{theorem}{Theorem}[section]
\newtheorem{lemma}[theorem]{Lemma}
\newtheorem{proposition}[theorem]{Proposition}
\newtheorem{cor}[theorem]{Corollary}
\theoremstyle{definition}
\newtheorem{definition}[theorem]{Definition}
\theoremstyle{remark}
\numberwithin{equation}{section}
\newcommand{\ww}{{\mathcal{W}}}
\begin{document}
%\phantom{a}
%\vspace{-1.5cm}
\title[Wick type deformation quantization of contact metric manifolds]{Wick type deformation quantization of contact metric manifolds}

%    Information for second author
\author{Boris M. Elfimov}
\address{Physics Faculty, Tomsk State University, Lenin ave. 36, Tomsk 634050, Russia}
\email{e1fimov@mail.ru}
%    Information for first author
\author{ Alexey A. Sharapov }
%    Address of record for the research reported here
\address{Physics Faculty, Tomsk State University, Lenin ave. 36, Tomsk 634050, Russia}
\email{sharapov@phys.tsu.ru}

\thanks {The work of B.M.E. was supported 
by  the Foundation for the Advancement of Theoretical Physics and Mathematics ``BASIS''. The work of A.A.Sh. was supported by the FAPESP Grant 2022/13596-8.}

\subjclass[2010]{Primary 53D55; Secondary 53D35}

\keywords{Contact metric manifolds, deformation quantization, Sasakian geometry, Wick quantization}

\begin{abstract}
We construct a Wick-type deformation quantization of contact metric manifolds. The construction is fully canonical and involves no arbitrary choice. 
Unlike the case of symplectic or Poisson manifolds, not every classical observable on a general contact metric manifold can be promoted to a quantum one due to possible obstructions to quantization. We prove, however, that all these obstructions disappear for Sasakian manifolds. 
\end{abstract}

\maketitle

\section{Introduction}

In the recent paper \cite{elfimov2022deformation}, we formulated the problem of deformation quantization for contact manifolds  and solved it using a modification of Fedosov's method \cite{Fedosov:1996}.   The main novelty compared to the case of symplectic or Poisson manifolds is  that  not every classical observable -- a Reeb-invariant function on a contact manifold -- admits  quantization. As a result, the algebra of quantum observables may well be smaller than the algebra of classical ones. The study of obstructions to quantization yielded  some new invariants of contact manifolds. 

One can regard the above quantization  as a contact counterpart of the Weyl-type deformation quantization of symplectic manifolds. 
This is the most natural quantization scheme if no extra geometric structure is assumed on symplectic or contact manifolds. In many interesting situations, however, symplectic manifolds come equipped with (integrable) Lagrangian polarizations, which are important, for example, for their  physical interpretation as phase spaces of mechanical systems. An invariant Lagrangian polarization is also an integral part of the geometric quantization method. In the most favorable case,  there is a pair of transverse Lagrangian polarizations, which  may be complex and mutually conjugate. The last situation is 
best exemplified  by (almost) K\"ahler manifolds with their holomorphic and antiholomorphic polarizations. By definition, a Wick quantization of a K\"ahler manifold is a quantization adapted to the underlying holomorphic structure, see Sec. \ref{S5}. There is a large body of literature devoted to various aspects of and approaches to the problem, starting with the pioneering works of Berezin \cite{Berezin1975GeneralCO, Berezin_1975}.  In Fedosov's approach, the Wick deformation quantization of K\"ahler manifolds was in-depth studied in \cite{bordemann1997fedosov, tamarkin1998fedosov, Dolgushev_2001, karabegov2001almost, la2016infinite}.

The odd-dimensional counterpart of almost K\"ahler geometry is known as contact metric geometry. This can be defined as the geometry of contact manifolds endowed with compatible Riemannian metrics. Incorporating a metric structure in the fabric of contact geometry  allows one to polarize a (maximally nonintegrable) contact distribution into holomorphic and antiholomorphic parts. In this paper, we use this polarization 
to formulate  a Wick-type deformation quantization of contact metric manifolds. 
As in \cite{elfimov2022deformation}, our construction largely relies on Fedosov's method. The rich geometry of contact metric manifolds allows for several types of integrability  conditions to be imposed. These lead to three special contact metric geometries: K-contact, CR, and Sasakian; the last one is the intersection of the first two. All these integrability conditions play an important part in the Wick deformation quantization developed below. In particular, we show that no obstructions arise for the Wick deformation quantization of Sasakian manifolds and the property of being K-contact ensures the absence of the first obstruction to quantization.  

To conclude this introduction, we  list the basic differences between the Weyl- and Wick-type deformation quantizations of contact manifolds. First, the contact metric manifolds enjoy a natural contact connection, making the Wick quantization completely canonical. The absence of a canonical connection on general contact manifolds introduces an inevitable arbitrariness in the Weyl quantization.  Second, the obstructions to the Weyl quantization of classical observables start from the second order in the deformation parameter, while in the Wick quantization, they arise already in the first order. Third, the differential operators that control the leading obstructions to quantization are completely different:  they are third-order for Weyl quantization and first-order for Wick.

\section{Contact metric geometry}\label{Sec2}

We follow the definitions and conventions of the classical  papers \cite{Sasaki1960ONDM, SH, Sasaki1962H}. They differ slightly from those adopted in more recent expositions of contact metric geometry \cite{Blair2002, BG}. The difference is explained in \cite[Thm. 7.5]{sasaki1965almost}. 

One can regard contact manifolds as odd-dimensional analogues of symplectic ones. Recall that a $1$-form $\lambda$ on a ($2m+1$)-dimensional manifold $M$ is called {\it contact} if 
\begin{equation}\label{int}
    \lambda \wedge (d\lambda)^m \neq 0
\end{equation}
at each point of $M$. The pair $(M,\lambda)$ is called a  {\it contact manifold}. Geometrically, Eq. (\ref{int}) means that the {\it contact distribution} $\mathcal{C}=\ker \lambda $ on $M$ is `maximally nonintegrable'. As a consequence, each contact manifold carries the canonical volume form  $v=\lambda \wedge (d\lambda)^m$ and an exact presymplectic structure defined by the $2$-form $\omega =d\lambda$ of rank $2m$. 
The kernel of the presymplectic form $\omega$, called {\it characteristic distribution}, is generated by the {\it Reeb vector field} $\xi$ which is chosen to satisfy 
\begin{equation}
 i_\xi\omega=0\,,\qquad i_\xi\lambda=1\,.   
\end{equation}
This gives the canonical splitting 
$
    TM=\mathcal{C}\oplus L_\xi
$,
where $L_\xi$ is the trivial line bundle generated by $\xi$.

It is known that every contact manifold admits a Riemannian metric $g$ and a tensor field $\phi$ of type $(1,1)$ that satisfy the following relations: 
\begin{equation}\label{algrel}
    \begin{array}{l}
      \lambda(X)=g(X,\xi) \,, \qquad  \omega (X,Y)=g(X,\phi(Y)) \,,\\[3mm]
       g(\phi(X), \phi(Y))=g(X,Y)-\lambda(X)\lambda(Y)\,,   \qquad \phi^2=-I+\xi\otimes \lambda\,,\\[3mm]
       \phi (\xi)=0\,, \qquad \lambda \circ \phi=0\,.
    \end{array}
\end{equation}
Here, $X$ and $Y$ are arbitrary vector fields on $M$ and we treat the $(1,1)$-tensor field\footnote{In \cite{Sasaki1960ONDM}, the tensor $\phi$ was dubbed a {\it fundamental singular collineation}, but this name  did not stick.} $\phi$ as defining an endomorphism of the tangent bundle $TM$. One usually calls $g$ an {\it associated metric}.   It is known that the associated metric is far from being unique and not all of Rels. (\ref{algrel}) are independent: the last two, for example, follow from the others.  We refer to the data $(\phi, \xi,\lambda, g)$ as a {\it contact metric structure} and call $M$, endowed with such a structure, a {\it contact metric manifold}. 

Every contact metric manifold carries a canonical {\it Jacobi structure} \cite{kirillov31local, lichnerowicz1978varietes}. This is given by the Reeb vector field $\xi$ and the bivector $\pi$ defined by the rule
\begin{equation}\label{pi}
    (\alpha\wedge \beta)(\pi)=\bar g (\alpha, \phi^t(\beta))\,.
\end{equation}
Here, $\alpha$ and $\beta$ are arbitrary $1$-forms; $\bar g$ and $\phi^t$ stand for the inverse metric and the transpose of the endomorphism $\phi$. One can see that 
\begin{equation}
    [\pi,\pi]_{SN}=2\pi\wedge \xi\,,\qquad [\xi,\pi]_{SN}=0\,.
\end{equation}
The subscript SN refers to the Schouten--Nijenhuis bracket on poly-vector fields. 
Alternatively, using the aforementioned volume form $v$, one can define the same bivector $\pi$ as dual to the form  $\lambda\wedge\omega^{m-1}$. This shows that the Jacobi structure $(\xi,\pi)$ is completely determined by the contact structure alone \cite[Sec. 6]{Sasaki1962H}.  
Given a Jacobi structure, one can make the space of smooth functions on $M$ into a Lie algebra for the following bracket:
\begin{equation}\label{bracket}
\{f,g\}=(df\wedge dg)(\pi)+f dg(\xi) -g df(\xi)\,.
\end{equation}
This is called the {\it Jacobi bracket} of functions $f,g\in C^\infty(M)$. 
It should be noted that  (\ref{bracket}) is not a Poisson bracket as it fails to satisfy the Leibniz identity:
$$
\{f,gh\}-\{f,g\}h-g\{f,h\}=gh df(\xi)\,.
$$
Notice, however, that the Jacobi bracket gives rise to a Poisson algebra structure on the commutative subalgebra of $\xi$-invariant functions $C^\infty_\xi(M)\subset C^\infty(M)$.  

{\it By a deformation quantization of a contact manifold $(M,\lambda)$ we mean a deformation quantization of the Poisson algebra $\big(C^\infty_\xi(M), \{\,\cdot\,,\,\cdot\,\}\big)$.} A precise definition will be given in the next section.

In order to develop the differential geometry of contact metric manifolds one usually introduces one or another adapted connection. By definition, a {\it contact metric connection} $\nabla$ is an affine connection  on $M$ that parallelizes the contact metric structure:
\begin{equation}\label{cmc}
    \nabla \lambda=0\,,\quad \nabla\xi=0, \quad \nabla \omega=0\,,\quad \nabla g=0\,,\quad \nabla \phi=0\,.
\end{equation}
The definition is somewhat redundant since the first three relations follow from the last two  \cite[Lem. 4.1]{STADTMULLER20122170}.   
It is known that a contact metric connection always exists and can be constructed as follows \cite[Thm. 11]{SH}. Let $\nabla^g$ denote the Levi-Civita connection that is compatible with the metric $g$. In each local coordinate system, define a $(1,2)$-tensor $S$ by the expression
\begin{equation}\label{S}
     S^i_{jk} = -\frac12 \phi^i_l \nabla^g_k\phi^l_j  - \frac12 \lambda_j \nabla^g_k\xi^i + \xi^i \nabla^g_k \lambda_j \,.
\end{equation}
Using (\ref{algrel}) one can readily verify that the new affine connection $\nabla=\nabla^g+S$ meets all conditions (\ref{cmc}). As usual, the commutator of covariant derivatives evaluated on a vector field $v$,
\begin{equation}\label{tc}
    [\nabla{}_k, \nabla{}_l] v^i = R^i_{jkl} v^j + T^j_{kl} \nabla_j v^i\,,
\end{equation}
gives the curvature  $R^i_{jkl}$ and the torsion tensor $T^i_{jk} = S^i_{jk} - S^i_{kj}$. The latter is always nonzero. Indeed, it follows from the first identity in (\ref{cmc}) that 
\begin{equation}
 \omega_{ij}= \lambda_kT_{ji}^k  \,.
\end{equation}
The tensor $S$ is known as the {\it potential} of a metric connection $\nabla=\nabla^g+S$. The following easily verified identity shows that the torsion of a metric connection contains the same information  as its potential:
\begin{equation}
    S_{ij}^k=\frac12T^k_{ij}+\frac12(g_{im}T^m_{jn}+g_{jm}T^m_{in})g^{nk}\,.
\end{equation}
Let us also introduce the covariant torsion and curvature tensors 
\begin{equation}\label{TR}
    T_{kij}=\omega_{kn}T^n_{ij}\,,\qquad R_{klij}=\omega_{kn}R^n_{lij}\,.
\end{equation}
They enjoy the following symmetry properties:
\begin{equation}\label{urel}
\qquad R_{klij}=R_{lkij}=-R_{klji}\,,\qquad T_{kij}+T_{ijk}+T_{jki}=0\,.
    \end{equation}

Let  $\hat \phi$ denote the restriction of the endomorphism $\phi: TM\rightarrow TM$ onto the contact distribution $\mathcal{C}\subset TM$. Then $\hat \phi^2=-I$; and hence, $\phi$ induces an almost complex structure on $\mathcal{C}$. Therefore, the complexification of the vector bundle $\mathcal{C}$ splits into the two transverse and  mutually conjugate subbundles composed of $\pm i$-eigenspaces of the operator $\hat\phi$:
\begin{equation}\label{PC}
\begin{array}{c}
    \mathcal{C}_{\mathbb{C}}=\mathcal{P}\oplus \bar {\mathcal{P}}\,, \qquad \mathcal{P}\cap \bar {\mathcal{P}}=0\,,   \\[3mm]
    \mathcal{P}=\{ X-i\hat \phi X \;|\; \forall X\in \mathcal{C} \}\,,\qquad  \bar{\mathcal{P}}=\{ X+i\hat \phi X \;|\; \forall X\in \mathcal{C} \}\,.   
    \end{array}
    \end{equation}
This splitting equips $M$ with an {\it almost CR structure} of hyper surface type, which can be viewed as an odd-dimensional counterpart of the almost complex structure\footnote{The abbreviation CR means either `Cauchy--Riemann' or `Complex-Real'. }. Furthermore, the presence of the positive definite Levi form 
\begin{equation}
    L_\lambda(X,Y)=\omega(\phi (X),Y)=g(X,Y)\qquad \forall X,Y\in \mathcal{C}\,,
\end{equation}
allows one to identify this almost CR structure as {\it strongly pseudoconvex} \cite{DT}. An almost CR structure is called integrable (or just CR structure) if the holomorphic subbundle $\mathcal{P}$ is involutive, i.e.,  $[\mathcal{P}, \mathcal{P}]\subset \mathcal{P}$. The last condition amounts to the vanishing of the $(1,2)$-tensor field 
\begin{equation}\label{Ntensor}
    Q_{ij}^k=\nabla^g_j\phi_i^k+ \xi^k\phi_i^n\nabla^g_j \lambda_n+\phi^k_n\nabla^g_j\xi^n\lambda_i\,.
\end{equation}
This result is due to Tanno \cite{tanno1989variational}. 

For the purposes of deformation quantization, we also introduce the dual vector bundles $\mathcal{C}^\ast $ and $\mathcal{P}^\ast$. The sections of the former are naturally identified with  the $1$-forms on $M$ that annihilate the Reeb vector field $\xi$.  To characterise the latter it is convenient to equip the complexified cotangent bundle $T^\ast_{\mathbb{C}}M$ with the Hermitian form 
\begin{equation}\label{h}
    h(\alpha,\beta)=\bar g(\alpha, \beta +i\phi^t(\beta))=i\pi(\alpha,\beta )+\bar g(\alpha,\beta)\,,
\end{equation}
where   $ \alpha$ and  $\beta$ are arbitrary $1$-forms.  Let $\hat h$ denote the restriction of $h$ onto $\mathcal{C}^\ast$. Then  $\mathcal{P}^\ast\subset \mathcal{C}^\ast_{\mathbb{C}}$
coincides with the right kernel of $\hat{h}$, that is, $\hat h(-,\mathcal{P}^\ast)=0$. In view of Hermiticity  the antiholomorphic distribution $\bar{\mathcal P}^\ast$ generates then the left kernel of $\hat h$. The contact metric connection (\ref{cmc}) clearly respects $h$. We will refer to the Hermitian form (\ref{h}) as the {\it Wick tensor}. 

Finally,  a contact metric structure $(\phi, \xi, \lambda,  g,)$ is said to be {\it $K$-contact} if $\xi$ is a Killing vector of the associated metric $g$, i.e., $\mathcal{L}_\xi g=0$. If, in addition, the corresponding CR structure is integrable ($Q=0$), then the contact metric manifold is called {\it Sasakian}. It is known    \cite[Thm. 11]{Sasaki1962H}   that  the property of being  Sasakian is equivalent to the identity
\begin{equation}\label{Sas}
    2\nabla^g_k \omega_{ij} = \lambda_ig_{kj}-\lambda_jg_{ki}\,.
\end{equation}
In dimension $3$, the holomorphic distribution $\mathcal{P}$, being one-dimensional, is automatically integrable. Therefore every $3$-dimensional K-contact manifold is Sasakian.

\section{Wick deformation quantization of contact metric manifolds}\label{Sec3}
As mentioned in the previous section, by the deformation quantization of a contact metric manifold $M$ we understand the quantization of the Poisson algebra $(C^\infty_\xi(M), \{\,\cdot \,,\,\cdot\,\})$ associated with the canonical Jacobi structure on $M$. It is remarkable that the Jacobi 
bracket $\{\,\cdot \,,\,\cdot\,\}$ is determined by the contact structure alone and in no way involves the associated metric.   The general definition of the Weyl deformation quantization of contact manifolds was formulated in our recent paper \cite{elfimov2022deformation}. 
The presence of the canonical Hermitian form (\ref{h}) on $M$ suggests the following definition of  Wick deformation quantization.

%Let us start with a couple of basic definitions.  
\begin{definition}\label{D41} Let $\nu$ be a  formal deformation parameter and $C_\mathbb{C}^\infty(M)$ be the commutative algebra of smooth complex-valued functions on $M$. 
By a {\it Wick deformation quantization} of a contact metric manifold $(M;  \phi, \xi, \lambda, g)$ we understand  a pair $(\ast, \Delta )$, where $\ast$ and  $\Delta$ are, respectively, bilinear and linear operators on   $C_{\mathbb{C}}^\infty(M)[\![\nu]\!]$ of the form
$$ a\ast b= ab+\frac\nu 2h(da,  db)+\sum_{k=2}^\infty \nu^kC_k(a,b)\,,\qquad
    \Delta a=\xi a+\sum_{k=1}^\infty \nu^k\Delta_k a\,.$$
It is also assumed that 
\begin{enumerate}
    \item $C_k$ and $\Delta_k$ are  differential operators on $C^\infty(M)$, which are extended to $C_{\mathbb{C}}^\infty(M)[\![\nu]\!]$ by $\mathbb{C}[\![\nu]\!]$-linearity;
    \item  the elements of the subspace $\ker \Delta$ form an associative algebra w.r.t. the $\ast$-product, that is, 
$$
\Delta(a\ast b)=0\quad \mbox{and}\quad  (a\ast b)\ast c=a\ast(b\ast c)
$$
whenever $\Delta a=\Delta b=\Delta c=0$. 
\end{enumerate}
The pair $(\ker \Delta, \ast)$ is called the {\it algebra of quantum observables}. 
\end{definition}

An important property of the operators $\ast$ and $\Delta$ is their {\it locality}, i.e., the possibility of restriction on functions 
defined in any open domain $U\subset M$. By abuse of language, we will refer to this as the restriction of the operators $\ast$ and $\Delta$  onto the domain $U$. 

In order to construct the Wick deformation quantization  above we will follow Fedosov's approach \cite{Fedosov:1996}. Let $U$ be a coordinate chart on $M$ with coordinates $\{x^i\}$ and denote by $\{y^i\}$ the fibre coordinates on the tangent bundle $TU$ w.r.t. the natural frame $\{\partial/\partial x^i\}$. For each point $x\in M$ define the {\it formal Wick algebra} $W_x$ as the space spanned by the formal power series with complex coefficients
\begin{equation}
    a(\nu,y)=\sum_{k,l=0}^\infty \nu^k a_{k,i_1\cdots i_l }y^{i_1}\cdots y^{i_l}
\end{equation}
endowed with the $\circ$-product 
\begin{equation}
    (a\circ b)(y)=\exp\left(\frac{\nu}{2}h^{ij}(x)\frac{\partial^2}{\partial y^i\partial z^j}\right) a(y)b(z)\Big|_{z=y}\,,
\end{equation}
$h^{ij}(x)$ being the components of the Wick tensor at $x$. The $\circ$-product is known to be associative, but not commutative. Taking the union of algebras $W_x$, we obtain the bundle of formal Wick algebras $W$, whose space of sections will be denoted by $\mathcal{W}$. Prescribing the formal variables $y$'s and $\nu$ the degrees $1$ and $2$, respectively,  makes $\mathcal{W}$ into an $\mathbb{N}$-graded algebra, that is, $\mathcal{W}=\bigoplus_{n\geq 0}\mathcal{W}_n$.  We  refer to this grading as the {\it $\nu$-degree}. 
Associated with the $\nu$-degree is the descending filtration 
$$
{F}^k\mathcal{W}\supset {F}^{k+1}\mathcal{W}\,,\qquad {F}^k\mathcal{W}=\bigoplus_{n\geq k}\mathcal{W}_{n}\,,\qquad k=0,1,\ldots\,,
$$
which defines topology and convergence in the space of formal power series. 
Sometimes we will use another natural $\mathbb{N}$-grading on the  vector space $\ww$, which is given by the polynomial degree in $y$'s. We  refer to it as {\it $y$-degree} and write $\ww=\bigoplus_{n\geq 0} \ww_{(n)}$. Unlike the $\nu$-degree, the $y$-degree does not respect the $\circ$-product. 

The most interesting to us will be the subalgebra $\mathcal{W}^\xi\subset\mathcal{W}$ of $\xi$-transverse sections of $W$. This is defined as 
\begin{equation}
    \mathcal{W}^\xi\ni a\quad \Longleftrightarrow \quad \xi^i\frac{\partial a}{\partial y^i}=0\,.
    \end{equation}
Tensoring the algebra $\mathcal{W}$ with the exterior algebra $\Lambda(M)=\bigoplus_{n\geq 0} \Lambda_n(M)$ of differential forms on $M$, we obtain the bigraded algebra $\mathcal{W}\otimes \Lambda$, the second grading being the form degree. We will use the same  symbol $\circ$ for multiplication in the tensor product algebra. 
Let us also introduce the commutator of two elements of $\mathcal{W}\otimes \Lambda$ by 
\begin{equation}
    [a,b]=a\circ b-(-1)^{nm}b\circ a
\end{equation}
for all $a\in \mathcal{W}\otimes \Lambda_n$ and $b \in \mathcal{W}\otimes \Lambda_m$. The commutator makes the space $\mathcal{W}\otimes \Lambda$ into a graded Lie algebra w.r.t. the form degree. It is easy to see that the centre of this Lie algebra is given by $\Lambda(M) [\![\nu]\!]$.  The algebra $\mathcal{W}\otimes \Lambda$ is endowed with the differential $\delta$ of bidegree $(-1,1)$:
\begin{equation}
    \delta = dx^i \wedge \frac{\partial}{\partial y^i} \,, \qquad \delta^2  = 0\,.
\end{equation}
It is clear that
\begin{equation}
    \delta(a \circ b) = \delta a  \circ b + (-1)^n a \circ \delta b  \qquad \forall a \in \mathcal{W}\otimes \Lambda_n\,, \quad \forall b \in \mathcal{W}\otimes \Lambda\,.
\end{equation}
Upon restriction onto the differential subalgebra $\mathcal{W}^\xi\otimes \Lambda$ one can write the differential $\delta$ as
\begin{equation}
    \delta a = \frac{i}{\nu} [\omega_{ij} y^i dx^j, a]\qquad \forall a\in \mathcal{W}^\xi\otimes \Lambda\,.
\end{equation}
In order to describe the cohomology of the differential $\delta$ it is convenient to split $\Lambda(M)$ into the direct sum
\begin{equation}
    \Lambda = \Lambda^\xi \oplus \Lambda^\lambda \,, \qquad \Lambda^\xi = \{a \in\Lambda \,|\, i_\xi a = 0\}\,, \qquad \Lambda^\lambda = \{\lambda a\, |\, \forall a \in \Lambda\}\,.
\end{equation}
Since the differential $\delta$ leaves invariant the subspaces $\Lambda^\xi$ and $\Lambda^\lambda$, the groups of $\delta$-cohomology
\begin{equation*}
    H_{m,n}(\mathcal{W}\otimes \Lambda) = \mathrm{ker}(\delta : \mathcal{W}_m\otimes \Lambda_n\rightarrow \mathcal{W}_{m-1}\otimes \Lambda_{n+1})/\delta (\mathcal{W}_{m+1}\otimes \Lambda_{n-1})
\end{equation*}
split as $H_{m,n}(\mathcal{W}\otimes \Lambda) = H_{m,n}(\mathcal{W}\otimes \Lambda^\xi) \oplus H_{m,n}(\mathcal{W}\otimes \Lambda^\lambda)$.

\begin{proposition}\label{cohomology}
With the above notation 
$$
    \begin{array}{l}
        H_{0,0}(\mathcal{W}^\xi\otimes \Lambda) \simeq H_{0,0}(\mathcal{W}^\xi\otimes \Lambda^\xi)\simeq C_{\mathbb{C}}^\infty (M) [\![\nu]\!]\,,\\[3mm]
        H_{0,1}(\mathcal{W}^\xi\otimes \Lambda)\simeq H_{0,1}(\mathcal{W}^\xi\otimes \Lambda^\lambda)\simeq C_{\mathbb{C}}^\infty (M) [\![\nu]\!]\,,\\  [3mm]           
        H_{m,n}(\mathcal{W}^\xi\otimes \Lambda) = 0\,, \quad \forall m > 0 \quad \mbox{or}\quad n > 1\,.       
    \end{array}
$$
 
\end{proposition}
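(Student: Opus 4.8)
The plan is to prove this fibrewise, by the standard Fedosov Hodge--Koszul argument, with attention paid to the two features that distinguish the present situation from the symplectic one: the differential $\delta$ acts inside the $\xi$-transverse subalgebra, and the form bundle carries the extra splitting $\Lambda=\Lambda^\xi\oplus\Lambda^\lambda$. Since $\delta=dx^i\wedge\partial/\partial y^i$ contains no derivatives along the base, it is $C^\infty(M)$-linear, so the groups $H_{m,n}$ may be computed fibrewise, at a single point $x\in M$. There I would fix a coframe $(\theta^0=\lambda,\theta^1,\dots,\theta^{2m})$ of $T^\ast_xM$ with $\theta^1,\dots,\theta^{2m}$ spanning the annihilator $\mathcal{C}^\ast_x$ of $\xi$, take $e_0=\xi,e_1,\dots,e_{2m}$ for the dual frame, and denote by $y^0,\dots,y^{2m}$ the corresponding fibre coordinates. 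As is well known, $\delta$ has the frame-independent form $\delta=\sum_{a=0}^{2m}\theta^a\wedge\partial/\partial y^a$; moreover the condition defining $\mathcal{W}^\xi$ reads $\partial/\partial y^0=0$ in this frame, whence on $\mathcal{W}^\xi$ one has $\delta=\sum_{\alpha=1}^{2m}\theta^\alpha\wedge\partial/\partial y^\alpha$, and in particular $\delta$ never creates a factor of $\lambda$.

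The key step is to introduce the contact Koszul homotopy $\delta^\ast=\sum_{\alpha=1}^{2m}y^\alpha\,\iota_{e_\alpha}$, where $\iota_{e_\alpha}$ denotes contraction of the form component of an element. It maps $\mathcal{W}^\xi\otimes\Lambda$ into itself, since it only multiplies by the contact fibre coordinates and $\iota_{e_\alpha}\lambda=\lambda(e_\alpha)=0$, so that $\lambda$ is invisible to $\delta^\ast$ exactly as it is to $\delta$. Using $\{\theta^\alpha\wedge\,,\,\iota_{e_\beta}\}=\delta^\alpha_\beta$ one checks that the cross-terms cancel and
\[
\delta\delta^\ast+\delta^\ast\delta=N:=\sum_{\alpha}y^\alpha\frac{\partial}{\partial y^\alpha}+\sum_{\alpha}\theta^\alpha\wedge\iota_{e_\alpha},
\]
the operator that counts the polynomial degree in the $y$'s plus the number of $\theta^\alpha$-factors; note that $N$ commutes with both $\delta$ and $\delta^\ast$. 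Decomposing $\mathcal{W}^\xi\otimes\Lambda$ into the $N$-eigenspaces $V_\mu$, $\mu\in\mathbb{N}$, the identity above makes $\mu^{-1}\delta^\ast$ a contracting homotopy on every $V_\mu$ with $\mu\ge1$, while $\delta$ vanishes identically on $V_0$ because it lowers the $y$-degree; and since $\delta$ preserves each $V_\mu$, no nonzero element of $V_0$ is a coboundary. Hence the $\delta$-cohomology of $\mathcal{W}^\xi\otimes\Lambda$ is isomorphic to $V_0$.

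It then remains to identify $V_0$ and match the grading. By construction $V_0$ consists of the sections with no $y$'s and no $\theta^\alpha$-factors, that is, it is spanned over $C^\infty_{\mathbb{C}}(M)[\![\nu]\!]$ by $1$ and by $\lambda$: the first generator has $y$-degree $0$ and form degree $0$ and lies in $\mathcal{W}^\xi\otimes\Lambda^\xi$, the second has $y$-degree $0$ and form degree $1$ and lies in $\mathcal{W}^\xi\otimes\Lambda^\lambda$. This gives all the asserted isomorphisms and the vanishing of $H_{m,n}$ for $m>0$ or $n>1$, compatibly with the splitting $H_{m,n}(\mathcal{W}\otimes\Lambda)=H_{m,n}(\mathcal{W}\otimes\Lambda^\xi)\oplus H_{m,n}(\mathcal{W}\otimes\Lambda^\lambda)$ recorded above. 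The one point that needs genuine care, rather than routine bookkeeping, is the interaction of $\xi$-transversality with the form bundle: one must contract only along the contact directions so that $\delta^\ast$ stays inside $\mathcal{W}^\xi$, and the reason the answer is richer than in the classical symplectic Fedosov lemma is precisely that $\lambda$ is killed by both $\delta$ and $\delta^\ast$, so that the harmonic space $V_0$ acquires the extra summand $C^\infty_{\mathbb{C}}(M)[\![\nu]\!]\cdot\lambda$, which is the generator of $H_{0,1}$. That $\delta^\ast$ --- and hence the entire argument --- does not depend on the chosen frame can be seen by rewriting it through the canonical projection $TM\to\mathcal{C}$ along $\xi$, and I would not dwell on this.
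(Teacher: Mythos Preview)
Your proof is correct and follows essentially the same approach as the paper: both construct the same Koszul-type homotopy restricted to contact directions, the paper writing it invariantly as $\delta^\ast = y^i P^j_i\, i_{\partial/\partial x^j}$ via the projector $P^i_j = \delta^i_j - \xi^i\lambda_j$ (which is exactly the frame-independent form you allude to at the end). The paper packages the resulting homotopy identity as $(\delta\delta^{-1}+\delta^{-1}\delta)a = a - \Pi a$ with $\Pi a = a|_{y=0,\,dx=0} + \lambda\,(i_\xi a)|_{y=0,\,dx=0}$, which is precisely your identification of the harmonic space $V_0$ with $C^\infty_{\mathbb{C}}(M)[\![\nu]\!]\cdot 1 \oplus C^\infty_{\mathbb{C}}(M)[\![\nu]\!]\cdot\lambda$.
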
 
For the sake of completeness we reproduce the proof of this statement from \cite{elfimov2022deformation}.

\begin{proof}  The $(1,1)$-tensor $P^i_j = \delta^i_j - \xi^i\lambda_j$ defines an orthogonal projection  in the tangent bundle $TM$: 
\begin{equation*}
    P^2 = P\,, \qquad \mathrm{Tr}P = 2m\,,
\end{equation*}
\begin{equation*}
    P^i_j = \pi^{ik} \omega_{kj}\,, \quad \lambda_i P^i_j =0\,,\quad P^i_j \xi^j = 0\,, \quad P^k_i \omega_{kj} = \omega_{ij} \,, \quad \pi^{ik} P^j_k = \pi^{ij}\,.
\end{equation*}
 Define the operator $\delta^\ast : \mathcal{W}^\xi_m\otimes \Lambda_n \rightarrow \mathcal{W}^\xi_{m+1}\otimes \Lambda_{n-1}$ as
    \begin{equation*}
        \delta^\ast = y^i P_i^j i_{\frac{\partial}{\partial x^j}}\,.
    \end{equation*}
    It is easy to see that $(\delta^\ast)^2 = 0$, $\delta^\ast \lambda = 0$, and
    \begin{equation}
        (\delta \delta^\ast + \delta^\ast \delta) a = (m+n) a - \lambda i_\xi a  \qquad \forall a \in \mathcal{W}^\xi_{m}\otimes \Lambda_{n}\,.
    \end{equation}
    Notice that $\delta^\ast$ is not a derivation of the $\circ$-product, but it maps $\xi$-transverse forms of $\ww^\xi\otimes \Lambda$ to $\xi$-transverse. Now define the operator 
    \begin{equation}
        \delta^{-1}: \mathcal{W}^\xi_m\otimes \Lambda_n \rightarrow \mathcal{W}^\xi_{m+1}\otimes\Lambda_{n-1}    
        \end{equation}
        by the relations
    \begin{equation}
        \delta^{-1} a' = \left\{
            \begin{array}{cc}
                \frac{1}{m+n} \delta^\ast a' \,, & m+n >0\,; \\
                0 \,, & m = n =0\,;
            \end{array} \right. \qquad
        \delta^{-1} a'' = \left\{
            \begin{array}{cc}
                \frac{1}{m+n-1}\delta^\ast a''\,, & m+n > 1\,;  \\
                0\,, & m+n = 1 \,;
            \end{array} \right.
    \end{equation}
    for all $a'\in \mathcal{W}_{(m)}\otimes \Lambda_n^\xi$ and $a'' \in \mathcal{W}_{(m)}\otimes \Lambda^\lambda_n$. Writing $a=a'+a''$, one can check that
    \begin{equation}\label{HDR}
        (\delta \delta^{-1} + \delta^{-1} \delta)a = a - \Pi a \,,
    \end{equation}
    where the operator $\Pi$ acts as
    \begin{equation*}
        \Pi a = a(x, 0, 0, \nu) + \lambda (i_\xi a) (x, 0, 0, \nu) \qquad \forall \, a = a(x, y, dx, \nu) \in \mathcal{W}^\xi\otimes \Lambda\,.
    \end{equation*}
    Clearly, $\Pi^2=\Pi$.    Eq. (\ref{HDR}) implies that all nontrivial $\delta$-cocycles are nested in the space $\mathrm{ker}(1-\Pi)$ spanned by the sums $a + \lambda b$ with $a, b \in C_{\mathbb{C}}^\infty (M)[\![\nu]\!]$. On the other hand, the elements of the form $a + \lambda b$ are obviously nontrivial cocycles and the proposition follows. 
    \end{proof}

Every  contact metric connection $\nabla$ on the tangent bundle $TM$ induces a linear connection on the associated bundle of formal Wick algebras $W$. The latter, in turn, gives rise to an exterior covariant derivative on $\mathcal{W}^\xi\otimes \Lambda$. In each coordinate chart, it is defined as
\begin{equation}\label{d}
d_\nabla a =dx^i\wedge \nabla_i a=dx^i\wedge \left(\frac{\partial a}{\partial x^i}-y^j\Gamma_{ji}^k\frac{\partial a}{\partial  y^k}\right)\qquad \forall a\in \ww^\xi\otimes\Lambda\,,
\end{equation}
$\Gamma_{ij}^k = \Gamma_{ij}^k(x)$ being the coefficients of the contact metric connection. Since $\nabla$ respects the Wick tensor $h$, the graded Leibniz rule holds
\begin{equation}\label{L}
d_\nabla (a\circ b)=d_\nabla a\circ b+(-1)^n a\circ d_\nabla b\qquad \forall a\in\ww\otimes \Lambda_n\,,  \quad \forall b\in\ww\otimes \Lambda\,.
\end{equation}
Define the elements 
\begin{equation}
T = \frac12 T_{kij} y^k dx^i \wedge dx^j \in \ww^\xi_1\otimes \Lambda_2 \quad  \mbox{and}\quad  R = \frac14  R_{ijkl} y^i y^j dx^k \wedge dx^l \in \ww^\xi_2\otimes \Lambda_2
\end{equation}
associated with the torsion and curvature of the contact metric connection (\ref{TR}). With the help of Rels. (\ref{urel}) and the identity 
\begin{equation}
    g^{in}R_{nkl}^j+ g^{jn}R_{nkl}^i=0
    \end{equation}
one can find
\begin{equation}
    (d_\nabla \delta + \delta d_\nabla)a= -\frac{i}{\nu} [T, a]\,,\qquad d_\nabla^2a=\frac i\nu[R,a]\qquad \forall a\in \ww^\xi\otimes \Lambda\,.    
    \end{equation}
The Jacobi identity for the double commutators of the operators $\delta$ and $d_\nabla$ implies the following Bianchi identities:
\begin{equation}
    \delta T=0\,,\qquad \delta R=d_\nabla T\,,\qquad d_\nabla R=0\,.
\end{equation}

Following Fedosov, we consider more general connections in the Wick bundle $W$, which are defined by exterior covariant derivatives  of the form
\begin{equation}\label{fc}
    Da = d_\nabla a - \delta a+ \frac{i}{\nu} [r, a] = d_\nabla a+ \frac{i}{\nu}[r-\omega_{ij} y^i dx^j  , a] 
\end{equation}
for some $r\in F^2\ww^\xi\otimes \Lambda_1$. A straightforward calculation yields 
\begin{equation}
    D^2 a = \frac{i}{\nu}[\Omega, a] \,, 
\end{equation}
where
\begin{equation}\label{Omega}
    \Omega = \omega - \delta r + T+R + d_\nabla r + \frac{i}{\nu} r \circ r \in \ww^\xi\otimes \Lambda_2\,, \qquad \omega=\frac12\omega_{ij}dx^i\wedge dx^j\,.
\end{equation}
The two-form $\Omega$ is called the \textit{Weyl curvature} of the Fedosov connection $D$. 
%It should be noted that the shift of $r$ by a central one-form does not affect both the connection $D$ and the curvature $\Omega$. Therefore only the de Rham %cohomology $[\Omega] \in H^2(M) [\![\hslash]\!]$ has an invariant meaning.  
According to the Bianchi identity $D\Omega=0$. A  connection $D$ is called {\it abelian} if $D^2=0$. 
This implies that the curvature $\Omega$ of an abelian connection is a closed two-form of $\Lambda_2(M)[\![\nu]\!]$.

\begin{theorem}
    For  a given contact metric connection $\nabla$ there exists a unique abelian connection (\ref{fc}) with  Weyl curvature $\Omega=\omega$ and $\delta^{-1}r =0\,.$
\end{theorem}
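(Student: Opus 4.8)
The plan is to construct $r$ recursively in the $\nu$-degree filtration $F^k\ww^\xi\otimes\Lambda_1$, mimicking the classical Fedosov iteration. First I would impose the abelian condition $\Omega=\omega$ on the Weyl curvature \eqref{Omega}, which after cancelling $\omega$ reads
\begin{equation*}
\delta r = T + R + d_\nabla r + \frac{i}{\nu}\, r\circ r\,.
\end{equation*}
Applying $\delta^{-1}$ and using the normalization $\delta^{-1}r=0$ together with the homotopy identity \eqref{HDR} — noting that $r$ has no $\Pi$-component since it lies in $F^2\ww^\xi\otimes\Lambda_1$ and $\delta^{-1}\delta r$ picks up the rest — I would convert this into the fixed-point equation
\begin{equation*}
r = \delta^{-1}\!\left(T + R + d_\nabla r + \frac{i}{\nu}\, r\circ r\right)\,.
\end{equation*}
One checks directly that $\delta^{-1}T$ and $\delta^{-1}R$ already lie in $F^2$, so the right-hand side maps $F^2\ww^\xi\otimes\Lambda_1$ into itself.

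Next I would verify that this operator is a contraction with respect to the filtration: the term $d_\nabla r$ preserves $\nu$-degree but $\delta^{-1}$ raises the $y$-degree, hence (because $r$ starts in degree $\geq 2$ and $\delta^{-1}$ together with the structure of $d_\nabla$ shifts into strictly higher filtration on the relevant components) the map is degree-raising modulo the seed terms; the quadratic term $\tfrac{i}{\nu}r\circ r$ is manifestly in higher filtration when $r\in F^2$. Therefore the iteration $r^{(0)}=0$, $r^{(k+1)}=\delta^{-1}(T+R+d_\nabla r^{(k)}+\tfrac{i}{\nu}r^{(k)}\circ r^{(k)})$ converges in the $F$-adic topology to a unique solution $r\in F^2\ww^\xi\otimes\Lambda_1$. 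Uniqueness is immediate from the contraction property: any two solutions of the fixed-point equation agree to all orders. I should also confirm that the iteration stays inside $\ww^\xi\otimes\Lambda_1$, i.e. that $\xi$-transversality is preserved — this follows because $\delta^{-1}$, $d_\nabla$ (built from the contact metric connection, which annihilates $\xi$), and $\circ$ all respect the subalgebra $\ww^\xi\otimes\Lambda$, and $T,R\in\ww^\xi\otimes\Lambda_2$.

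The remaining point is that the $r$ so constructed actually yields an \emph{abelian} connection, not merely one with $\delta r$ equal to the prescribed expression. Here I would invoke the Bianchi identity $D\Omega=0$: setting $\tilde\Omega=\Omega-\omega$, one gets $\delta\tilde\Omega = d_\nabla\tilde\Omega + \tfrac{i}{\nu}[r,\tilde\Omega]$, and since $\tilde\Omega\in F^3\ww^\xi\otimes\Lambda_2$ by construction, applying $\delta^{-1}$ and the homotopy formula \eqref{HDR} (the $\Pi$-component vanishes as $\tilde\Omega$ has positive $\nu$-degree and lies in $\Lambda_2$) gives $\tilde\Omega = \delta^{-1}(d_\nabla\tilde\Omega + \tfrac{i}{\nu}[r,\tilde\Omega])$, which is again a contraction fixed-point equation with only the zero solution. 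Hence $\Omega=\omega$ and $D^2=0$.

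The main obstacle I anticipate is bookkeeping the filtration degrees carefully enough to be sure the relevant map is a genuine contraction — in particular handling the $d_\nabla r$ term, since $d_\nabla$ does not by itself raise $\nu$-degree, so one must exploit the interplay with $\delta^{-1}$ and the fact that the seeds $T$ and $R$ live in $\nu$-degrees $1$ and $2$ respectively (with $T$ sitting in $y$-degree $1$, which is what forces $\delta^{-1}T\in F^2$). Everything else is a routine adaptation of Fedosov's argument, the only genuinely new feature being that one works within the $\xi$-transverse subalgebra $\ww^\xi\otimes\Lambda$ and with the Wick (rather than Weyl) fibre product $\circ$; neither of these affects the structure of the recursion, since Proposition~\ref{cohomology} supplies exactly the homotopy operator needed.
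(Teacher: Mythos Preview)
Your proposal is correct and follows exactly the approach the paper has in mind: the paper in fact omits the proof, remarking only that it ``essentially repeats that of \cite{elfimov2022deformation}'' and recording the fixed-point equation \eqref{EqForr} together with the observation that $\delta^{-1}$ raises the $\nu$-degree so that iteration converges. Your write-up is a faithful (and somewhat more explicit) version of that same argument, including the standard Bianchi-identity step showing that the fixed point of the integral equation really solves $\Omega=\omega$; the only cosmetic slip is the claim $\tilde\Omega\in F^3$, which is not needed---any positive filtration degree suffices for the homogeneous contraction argument to force $\tilde\Omega=0$.
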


The theorem states the existence of a unique solution to the equations
\begin{equation}\label{ab}
\delta r = T +R + d_\nabla r + \frac{i}{\nu} r \circ r \,,\qquad \delta^{-1}r=0\,.
\end{equation}
The proof of this fact essentially repeats that of \cite{elfimov2022deformation}, and we omit it here. In practical terms, the desired solution is obtained by solving the equation
 \begin{equation}\label{EqForr}
        r = \delta^{-1}(T + R) + \delta^{-1}\left( d_\nabla r + \frac{i}{\nu} r \circ r\right)\,,
    \end{equation}
which is an obvious  consequence of  (\ref{ab}) and 
(\ref{HDR}). Since $\delta^{-1}(R + T) \in \mathcal{W}^\xi_2\otimes \Lambda_2$ and the operator $\delta^{-1}$ increases the $\nu$-degree by one unit, we can solve Eq. (\ref{EqForr}) by iteration. It also follows from (\ref{EqForr}) that $\delta^{-1} r = 0$. 
Iterating Eq. (\ref{EqForr}) yields 
\begin{equation}\label{r}
        r = \frac13 y^i y^mP^j_m T_{ijk}   P^k_ndx^n + \frac{1}{2}y^i y^j T_{ijk}\xi^k  \lambda+\cdots\,,
\end{equation}
where dots stand for elements of $F^3\ww^\xi\otimes \Lambda_1$. The higher-degree terms involve the curvature tensor.

Associated with an abelian connection $D$ is the {\it algebra of  flat sections} $\ww_D$. By definition, 
$$
\ww_D=\{a\in \ww^\xi\;|\;Da=0\}\,.
$$
Since $D$ differentiates the $\circ$-product, the flat sections form a subalgebra in $\ww^\xi$.  The algebra $\ww_D$ is the central object of the Fedosov quantization. To clarify the structure of the space $\ww_D$ we need the following lemma of \cite{elfimov2022deformation}.

\begin{lemma}
    For any $a \in \mathcal{W}^\xi$ the equation
    \begin{equation}\label{b-a}
        b = a + \delta^{-1}(D + \delta)b 
    \end{equation}
    has a unique solution for $b \in \mathcal{W}^\xi$. Furthermore, $\Pi b = 
    \Pi a$.
\end{lemma}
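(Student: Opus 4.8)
The plan is to solve~(\ref{b-a}) by a step-by-step iteration in the $\nu$-degree, treating its right-hand side as a contraction for the $F$-adic topology on $\ww^\xi$. First I would eliminate $\delta$ from the operator $D+\delta$: since $D=d_\nabla-\delta+\frac{i}{\nu}[r,\cdot]$ by~(\ref{fc}), one has $D+\delta=d_\nabla+\frac{i}{\nu}[r,\cdot]$, so that (\ref{b-a}) reads $b=a+\delta^{-1}\big(d_\nabla b+\frac{i}{\nu}[r,b]\big)$. Each operator on the right preserves $\ww^\xi\otimes\Lambda$: $d_\nabla$ by formula~(\ref{d}); the commutator $[r,\cdot]$ because $r\in F^2\ww^\xi\otimes\Lambda_1$ and $\ww^\xi$ is a $\circ$-subalgebra; and $\delta^{-1}$ as recorded right after~(\ref{HDR}). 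Thus the whole iteration stays inside $\ww^\xi$.

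The heart of the matter is a degree count. Assigning the $y$'s and $\nu$ the degrees $1$ and $2$ makes $\circ$ additive in the $\nu$-degree, and inspection of~(\ref{d}) shows that $d_\nabla$ preserves the $\nu$-degree. The commutator $[r,\cdot]$ with $r\in F^2\ww^\xi\otimes\Lambda_1$ raises the $\nu$-degree by at least $2$, since its commutative leading terms cancel and everything that survives carries at least one power of $\nu$ on top of the degree $\ge 2$ already present in $r$; hence $\frac{i}{\nu}[r,\cdot]$ does not lower the $\nu$-degree. Finally $\delta^{-1}$ raises the $\nu$-degree by exactly one unit, because $\delta^\ast=y^iP_i^ji_{\partial/\partial x^j}$ inserts a single factor $y$. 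Putting these together, $\delta^{-1}(D+\delta)$ maps $F^k(\ww^\xi\otimes\Lambda)$ into $F^{k+1}(\ww^\xi\otimes\Lambda)$ for every $k\ge 0$.

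With this established, the successive approximations $b^{(0)}=a$ and $b^{(n+1)}=a+\delta^{-1}(D+\delta)b^{(n)}$ satisfy $b^{(n+1)}-b^{(n)}=\delta^{-1}(D+\delta)\big(b^{(n)}-b^{(n-1)}\big)\in F^{n+1}(\ww^\xi\otimes\Lambda)$, so the sequence is Cauchy in the $F$-adic topology and its limit $b\in\ww^\xi$ solves~(\ref{b-a}). Uniqueness follows at once: the difference $c$ of two solutions is fixed by the filtration-increasing operator $\delta^{-1}(D+\delta)$, so $c\in F^k$ for all $k$, whence $c=0$. For the last claim, apply $\Pi$ to~(\ref{b-a}) to get $\Pi b=\Pi a+\Pi\delta^{-1}(D+\delta)b$; it thus suffices to check that $\Pi\circ\delta^{-1}=0$ on $\ww^\xi\otimes\Lambda$. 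But every element in the image of $\delta^\ast$, and hence of $\delta^{-1}$, carries an explicit factor of $y$; therefore it and its $i_\xi$-contraction both vanish at $y=0$, which by the definition of $\Pi$ means $\Pi\delta^{-1}c=0$ for all $c$. Hence $\Pi b=\Pi a$.

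I expect the only real obstacle to be the degree bookkeeping --- in particular making sure the factor $1/\nu$ in $\frac{i}{\nu}[r,\cdot]$ is compensated by the $\nu$-degree $\ge 2$ of $r$, so that this term does not decrease the $\nu$-degree and, together with the strict increase coming from $\delta^{-1}$, renders $\delta^{-1}(D+\delta)$ filtration-increasing. Once that point is secured, both the fixed-point construction and the identity $\Pi\delta^{-1}=0$ are routine.
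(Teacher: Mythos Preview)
Your argument is correct and follows essentially the same route as the paper's proof: you establish that $\delta^{-1}(D+\delta)$ raises the $\nu$-filtration by one, iterate to a unique fixed point, and then use that the image of $\delta^{-1}$ carries a positive power of $y$ to conclude $\Pi b=\Pi a$. The paper states these same points more tersely, without spelling out the individual degree counts for $d_\nabla$, $\frac{i}{\nu}[r,\cdot]$, and $\delta^{-1}$ as you do. One minor remark on exposition: in your degree bookkeeping for $[r,\cdot]$, the cancellation of the commutative leading term is what guarantees divisibility by $\nu$, but it does not itself add to the $\nu$-degree (each $\nu^k$-term in the $\circ$-expansion already has the same total $\nu$-degree as the pointwise product); the net effect of $\frac{i}{\nu}[r,\cdot]$ being filtration-preserving comes simply from $r\in F^2$ together with the $\circ$-product being $\nu$-graded, and your conclusion stands.
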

The statement follows from the observation that the linear operator $\delta^{-1}(D + \delta)$ raises the $\nu$-degree by one unit. Therefore one can solve Eq. (\ref{b-a}) by iterations to get a unique solution for $b$. Since the image of $\delta^{-1}$ depends on positive powers of $y$'s, applying the projection  $\Pi$ to both sides of (\ref{b-a}) gives $\Pi b=\Pi a$. The assignment $a\mapsto b$ defines an injective  $\mathbb{C}[\![\nu]\!]$-linear map
\begin{equation}\label{Q}
    Q: C_{\mathbb{C}}^\infty (M)[\![\nu]\!] \rightarrow \mathcal{W}^\xi\,,
\end{equation}
called a \textit{quantum lift}. Using (\ref{r}) and iterating (\ref{b-a}) for $a\in C^\infty_{\mathbb{C}}(M)[\![\nu]\!]$, one can find that up to the second order in $\nu$-degree
\begin{equation}\label{a}
        Qa  =  \displaystyle  a +  y^jP^i_j\nabla_i a   + \frac12 y^ny^m P^i_nP^j_m\nabla_i \nabla_j a  +\frac16 y^i y^nP^j_n T_{ijk}  \pi^{km} \nabla_{m} a +\cdots
\end{equation}
Again, the higher-order terms involve the curvature tensor and its covariant derivatives. 

The composition of maps
 \begin{equation}\label{PDQ}
     \Delta=i_\xi \Pi D Q
     \end{equation}
     defines an operator $\Delta: C_{\mathbb{C}}^\infty(M)[\![\nu]\!]\rightarrow C_{\mathbb{C}}^\infty(M)[\![\nu]\!]$.   
Using Eqs. (\ref{r}) and (\ref{a}), one can find that
     \begin{equation}\label{deltaa}
        \Delta  = \xi -\frac{\nu}{8} T_{ijk}\xi^k H^{ij|nm}\nabla_n\nabla_m +\frac{\nu}{24}T_{ijk}T_{nm p}\xi^pH^{ij|nm}\pi^{kl}\nabla_l+\mathcal{O}(\nu^2)\,,
\end{equation}
where 
\begin{equation}
\begin{array}{c}
    H^{ij|nm}=\pi^{in}g^{jm}+\pi^{jm}g^{in}+\pi^{jn}g^{im}+\pi^{im}g^{jn}\,,\\[3mm]
    H^{ij|nm}=H^{ji|nm}=H^{ij|mn}=-H^{nm|ij}\,.
    \end{array}
\end{equation}

The next theorem is completely analogous to Theorem 4.4  of Ref. \cite{elfimov2022deformation}. 
\begin{theorem}\label{T35}
    Each element $a\in \ww_D$ is completely determined by its projection $\Pi a\in C_{\mathbb{C}}^\infty(M)[\![\nu]\!]$ through  the formula 
\begin{equation}
    a=Q\Pi a\,.
\end{equation}
    An element $a_0\in C_{\mathbb{C}}^\infty(M)[\![\nu]\!]$ is the projection of a flat section if and only if $\Delta a_0=0$.  
\end{theorem}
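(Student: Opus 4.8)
The plan is to mimic the proof of Theorem 4.4 of \cite{elfimov2022deformation}, built on the fixed-point Lemma above and the Hodge-type identity \eqref{HDR}. The only new ingredient I will use is that the abelian connection $D$ acts within the $\xi$-transverse algebra $\ww^\xi\otimes\Lambda$ and squares to zero there, since its Weyl curvature $\Omega=\omega$ lies in the centre $\Lambda(M)[\![\nu]\!]$ of $\ww\otimes\Lambda$; this makes $D^2=0$ on $\ww^\xi\otimes\Lambda$.

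For the first assertion I would argue as follows. Let $a\in\ww_D$, so $Da=0$, that is $\delta a=d_\nabla a+\tfrac{i}{\nu}[r,a]=(D+\delta)a$. Since $a$ has form degree $0$ and $\delta^\ast$ lowers the form degree, $\delta^{-1}a=0$; applying $\delta^{-1}$ to the last identity and using \eqref{HDR} gives $a-\Pi a=\delta^{-1}\delta a=\delta^{-1}(D+\delta)a$, i.e. $a=\Pi a+\delta^{-1}(D+\delta)a$. Thus $a$ solves Eq. \eqref{b-a} with source term $\Pi a$, and the uniqueness clause of the Lemma forces $a=Q\Pi a$.

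For the second assertion, fix $a_0\in C^\infty_{\mathbb{C}}(M)[\![\nu]\!]$, put $b=Qa_0\in\ww^\xi$ and $w=Db\in\ww^\xi\otimes\Lambda_1$. Since $D$ is abelian, $Dw=D^2b=0$, so $w$ is a $D$-flat $\xi$-transverse $1$-form. The key step is to check that $\delta^{-1}w=0$: from the defining equation $b=a_0+\delta^{-1}(D+\delta)b$ one reads off $\delta^{-1}b=0$ (because $(\delta^{-1})^2=0$ and $\delta^{-1}a_0=0$), and then, via \eqref{HDR} and $\Pi b=\Pi a_0=a_0$, that $\delta^{-1}\delta b=b-a_0=\delta^{-1}(D+\delta)b$; since $w=(D+\delta)b-\delta b$, these two facts give $\delta^{-1}w=0$. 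Now I repeat the manipulation of the previous paragraph in form degree $1$: $Dw=0$ gives $\delta w=(D+\delta)w$, and \eqref{HDR} together with $\delta^{-1}w=0$ yields $w=\Pi w+\delta^{-1}(D+\delta)w$. As $\delta^{-1}(D+\delta)$ strictly raises the $\nu$-degree, iterating this relation shows that $w$ is completely determined by $\Pi w$; in particular $w=0$ whenever $\Pi w=0$.

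Finally I would assemble the pieces. For a $1$-form $\Pi w=\lambda\,(i_\xi w)(x,0,0,\nu)$, so $\Pi w=0$ if and only if $i_\xi\Pi w=0$, which is precisely $\Delta a_0=i_\xi\Pi DQa_0=0$. Hence $\Delta a_0=0\ \Longleftrightarrow\ \Pi w=0\ \Longleftrightarrow\ w=DQa_0=0\ \Longleftrightarrow\ Qa_0\in\ww_D$. If $\Delta a_0=0$, then $Qa_0$ is a flat section with $\Pi Qa_0=a_0$ by the Lemma, so $a_0$ is the projection of a flat section; conversely, if $a_0=\Pi b$ with $b\in\ww_D$, the first assertion gives $b=Q\Pi b=Qa_0$, whence $DQa_0=Db=0$ and $\Delta a_0=0$. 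I do not expect a deep obstacle here: the argument is a formal replay of the Fedosov uniqueness recursion, and the only points needing genuine care are the verification that $\delta^{-1}(DQa_0)=0$ (so that the reconstruction recursion for $w$ closes) and the clean separation of the $\Lambda^\xi$- and $\Lambda^\lambda$-components when $\Pi$ and $i_\xi$ are applied to the $1$-form $w$.
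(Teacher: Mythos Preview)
Your proposal is correct and is precisely the standard Fedosov-type argument the paper invokes by declaring the theorem ``completely analogous to Theorem~4.4 of Ref.~\cite{elfimov2022deformation}''; the paper gives no independent proof here, and your Hodge-identity/fixed-point reconstruction of $a$ and of $w=DQa_0$ reproduces that reference's reasoning verbatim. The two points you flag as needing care---$\delta^{-1}(DQa_0)=0$ and the identification $\Pi w=\lambda\,(i_\xi w)|_{y=0}$ for a $1$-form---are handled exactly as you indicate.
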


As an immediate corollary of this theorem, we conclude that $\ww_D\simeq \ker \Delta$. The quantum lift (\ref{Q}) allows one to pull back the $\circ$-product from $\ww^\xi$ to $C_\mathbb{C}^\infty [\![\nu ]\!]$. This results in the following $\ast$-product: 
\begin{equation}\label{sp}
    a\ast b=\Pi(Qa \circ Qb)\qquad \forall a,b \in C^\infty_{\mathbb{C}}(M)[\![\nu]\!] \,.
\end{equation}
Unlike the $\circ$-product, the $\ast$-product is not associative. Notice, however, that it becomes associative upon restriction to the subspace $\ker \Delta$. By construction, the associative algebra $(\ker \Delta, \ast)$ is isomorphic to  the algebra of flat sections $(\ww_D, \circ)$.

Now, it is straightforward to check that the linear and bilinear operators (\ref{PDQ}) and (\ref{sp}) do satisfy all the  properties assumed by Definition \ref{D41}. Hence, they define a Wick deformation quantization on a contact metric manifold $(M; \phi, \xi, \lambda, g)$. Notice that $1\in \ww_D$ and the algebra of quantum observables  $(\ker \Delta, \ast)$ is unital.

\section{Quantum observables}
In this section, we will examine more closely the space of quantum observables $\ker \Delta$. In particular, we will consider the implications of a contact metric manifold being K-contact.  

To study the solution space of the equation $\Delta a=0$, we expand the operator $\Delta$ and an element $a\in C^\infty_{\mathbb{C}}(M)[\![\nu]\!]$ in  
powers of $\nu$: 
\begin{equation}
    \Delta=\xi+\nu\Delta_1+\nu^2\Delta_2+\cdots\,,\qquad a=a_0+\nu a_1+\nu^2 a_2+\cdots \,.
\end{equation}
On substituting these back into the equation, we get an infinite sequence of relations, which start as
\begin{equation}\label{cheq}
   \xi a_0=0\,,\qquad \Delta_1 a_0+\xi a_1=0\,, \quad \ldots
\end{equation}
The first equation says that $a_0$ is a $\xi$-invariant function on $M$. 
%Such functions are usually called  {\it basic}. 
As mentioned in Sec. \ref{Sec2}, the $\xi$-invariant functions form a closed Poisson algebra $C^\infty_\xi(M)$ w.r.t. the Jacobi bracket (\ref{bracket}). It is natural to call 
$\big (C^\infty_\xi(M), \{\cdot\, ,\,\cdot \}\big )$ the {\it algebra of classical observables}. We thus conclude that the $\nu$-expansion of any quantum observable starts from a classical one. The higher powers of $\nu$ are `quantum corrections' we need to add to a 
classical observable $a_0$ to promote it at the quantum level.  The corrections are to be found successively from the chain of equations (\ref{cheq}).  
The solvability of these equations is not ensured in advance. It depends on a particular contact metric structure and a classical observable $a_0$ to be quantized.  Therefore, in contrast to the deformation quantization of Poisson manifolds, not every classical observable on a contact manifold admits quantization.
We say that a classical observable $a_0\in C^\infty_\xi(M)$ is {\it quantizable} if system (\ref{cheq}) admits a solution. Otherwise, we speak about obstructions to quantization. 
The first obstruction comes with the second equation in (\ref{cheq}). It says that the function $-\Delta_1 a_0$, defined by a classical observable $a_0$, must be a gradient of some other function $a_1\in C^\infty(M)$ along the Reeb vector field $\xi$. 
    Recall that $\Delta_1$ is a second-order differential operator defined by Eq. (\ref{deltaa}). After lengthy, albeit straightforward calculations, one can bring this operator into the form     
\begin{equation}\label{d1}
    \Delta_1=-\frac14(\mathcal{L}_\xi g^{ij})\nabla_i\nabla_j +\frac{1}{12}(\mathcal{L}_\xi g^{ij})(\nabla^g_i\omega_{jk})\pi^{kn}\nabla_n\,.
\end{equation}
The following proposition is now obvious.
\begin{proposition}
        On every K-contact manifold, $\Delta_1=0$.
        \end{proposition}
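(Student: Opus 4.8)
The proof is immediate from the explicit formula (\ref{d1}) for the operator $\Delta_1$. The plan is to observe that every term in (\ref{d1}) carries a factor of $\mathcal{L}_\xi g^{ij}$, the Lie derivative of the inverse metric along the Reeb vector field. By definition, a contact metric structure is K-contact precisely when $\xi$ is a Killing vector of the associated metric $g$, that is, $\mathcal{L}_\xi g=0$. Since raising indices with $g$ commutes with the flow of $\xi$ when $\mathcal{L}_\xi g=0$, the same vanishing holds for the inverse metric, $\mathcal{L}_\xi g^{ij}=0$ (equivalently, one differentiates $g^{ik}g_{kj}=\delta^i_j$ along $\xi$ and uses $\mathcal{L}_\xi g_{kj}=0$). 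Substituting this into (\ref{d1}) makes both the second-order term $-\frac14(\mathcal{L}_\xi g^{ij})\nabla_i\nabla_j$ and the first-order term $\frac{1}{12}(\mathcal{L}_\xi g^{ij})(\nabla^g_i\omega_{jk})\pi^{kn}\nabla_n$ vanish identically, so $\Delta_1=0$.

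There is genuinely no obstacle here, since all the work has already been absorbed into deriving the ``lengthy, albeit straightforward'' reduction of (\ref{deltaa}) to the form (\ref{d1}); the only content of the proposition is the recognition that the coefficient tensors of both terms are proportional to $\mathcal{L}_\xi g^{ij}$. If one wished to avoid invoking (\ref{d1}) directly, an alternative route would be to return to (\ref{deltaa}) and show that the combination $T_{ijk}\xi^k H^{ij|nm}$ controlling the leading term is itself proportional to $\mathcal{L}_\xi g$; this follows because the contact torsion $T$ is built from the potential $S$ in (\ref{S}), whose $\xi$-dependent pieces encode $\nabla^g\xi$ and hence $\mathcal{L}_\xi g$ once one uses that for a contact metric structure $\nabla^g_i\xi_j+\nabla^g_j\xi_i = (\mathcal{L}_\xi g)_{ij}$. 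But given that (\ref{d1}) is already stated in the excerpt, the one-line argument above suffices.
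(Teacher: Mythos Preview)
Your argument is correct and is exactly the paper's own: the authors state that the proposition is ``obvious'' immediately after displaying formula (\ref{d1}), with the evident reason being that both terms carry the factor $\mathcal{L}_\xi g^{ij}$, which vanishes on a K-contact manifold. There is nothing to add.
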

 Thus, every classical observable on a K-contact manifold can be promoted to a quantum one up to the first order in $\nu$ (e.g. setting $a_1=0$). Using the physics language, we can say that every classical observable on a K-contact manifold is quantizable in semi-classical approximation. 

 In the general case, the operator $\Delta_1$ is different from zero, so the equation $\xi a_1+\Delta_1 a_0=0$ may present a real  obstruction to quantization. 
To better understand this obstruction it is useful to rewrite  the operator (\ref{d1}) in terms of the metric connection:
\begin{equation}
      \Delta_1 =-\frac14(\mathcal{L}_\xi g^{ij})\nabla^g_i\nabla^g_j  -\frac14(\mathcal{L}_\xi g^{ij} ) S_{ij}^k\nabla^g_k    +\frac{1}{12}(\mathcal{L}_\xi g^{ij})(\nabla^g_i\omega_{jk})\pi^{kn}\nabla^g_n\,.\end{equation}
(The operator is supposed to act on scalar functions.) Next, one can find that
\begin{equation}
    (\mathcal{L}_\xi g^{ij} ) S_{ij}^k\nabla^g_k =\frac12(\mathcal{L}_\xi g^{ij})(\nabla^g_i\omega_{jk})\pi^{kn}\nabla^g_n +\frac12(\mathcal{L}_\xi g^{ij})(\mathcal{L}_\xi g_{ij})\nabla_\xi^g\,.
    \end{equation}
    The standard formulas for the commutator of the Lie and covariant derivatives,
    \begin{equation}
        [\mathcal{L}_\xi, \nabla_i]f=0\,,\qquad [\mathcal{L}_\xi, \nabla_i^g]\alpha_j= -({}^g\!R^k_{jni}\xi^n  +\nabla^g_i\nabla^g_j\xi^k)\alpha_k\,,
    \end{equation}
    allows us to write
\begin{equation}
    (\mathcal{L}_\xi g^{ij})\nabla^g_i\nabla^g_j=[\mathcal{L}_\xi, \mathbb{L}]- g^{ij}[\mathcal{L}_\xi,\nabla^g_i]\nabla^g_j=[\mathcal{L}_\xi, \mathbb{L}]   + ({}^g\!R^k_{nlm}g^{nm}\xi^l+g^{nm}\nabla_n^g\nabla_m^g \xi^k)\nabla_k^g\,,
    \end{equation}
    where ${}^g\!R_{nlm}^k$ is the Riemann curvature tensor and
    \begin{equation}
         \mathbb{L}=g^{nm}\nabla_n^g\nabla_m^g
    \end{equation}
is the Laplace operator on scalar functions. Combining all the parts together, we finally get
\begin{equation}\label{4d}
    -4\Delta_1= [\mathcal{L}_\xi, \mathbb{L}] +\frac12 (\mathcal{L}_\xi g_{ij})(\mathcal{L}_\xi g^{ij})\nabla_\xi^g+\nabla_\zeta\,.
\end{equation}
Here we introduced the  vector field
\begin{equation}
    \zeta=\Big({}^g\!R^i{}_{nkm}g^{nm}  \xi^k +  g^{nm}\nabla^g_n\nabla^g_m  \xi^i   +\frac{1}{6}(\mathcal{L}_\xi g^{nm})(\nabla^g_n\omega_{mk})\pi^{ki}\Big)\frac{\partial}{\partial x^i}\,.
\end{equation}
The commutator in (\ref{4d}) suggests to look for the first quantum correction in the form 
\begin{equation}
    a_1=\frac14 \mathbb{L}a_0 +\frac14 a_1'\,,
\end{equation}
where $a'_1$  satisfies the {first-order}  differential equation 
\begin{equation}\label{xiz}
    \xi a_1'=\zeta a_0\,.
\end{equation}

Below, we state some necessary conditions for this equation to have a solution. Recall that integral curves of the Reeb vector field $\xi$ are called {\it characteristics} of a contact metric manifold $M$. The function $a_0$, being $\xi$-invariant, must be constant on each characteristic.
The most interesting are closed characteristics, that is, periodic orbits of the Reeb flow\footnote{The famous Weinstein's  conjecture states that  every compact manifold $M$ enjoys at least one closed characteristic, see  \cite[Ch. 3.4]{Blair2002}.}.   Let us  multiply both sides of (\ref{xiz}) by $\lambda$ and then integrate over a closed characteristic $\gamma$. As the left-hand side vanishes for an obvious reason, we get the  condition
 \begin{equation}
    \Psi_\gamma[a_0]:=\int_\gamma \lambda \zeta a_0=0\,.
 \end{equation}
We call $\Psi_\gamma[a_0]$ the {\it characteristic of a classical observable} $a_0$ (associated with a closed characteristic $\gamma\subset M$). A nonzero number $\Psi_\gamma[a_0]$ represents thus the first obstruction to the quantization of $a_0\in C^\infty_\xi(M)$.  

Suppose now that the contact metric manifold $M$ is compact. Then we can endow the space of smooth functions $C^\infty(M)$ with a positive-definite inner product
\begin{equation}
    (a,b)=\int_M v{a}b\,.
\end{equation}
Here $v$ is the canonical volume form on $M$. Let $f'$ denote the derivative of a smooth function $f\in C^\infty(\mathbb{R})$. Multiplying both sides of Eq. (\ref{xiz}) by $f'(a_0)$ and using the $\xi$-invariance of $a_0$, we get
\begin{equation}
    \xi (f'(a_0) a_1)=\zeta f(a_0)\,.
\end{equation}
Since the volume form is $\xi$-invariant, integrating the last equality over $M$ yields 
\begin{equation}\label{Phif}
    \Phi_f[a_0]:=\int_M v\zeta f(a_0)=0\,.
\end{equation}
 We call this functional the {\it $f$-character of a classical observable  $a_0$}. Again, a nonzero value of $\Phi_f[a_0]$ implies an obstruction to quantization. Define the {\it character of  a contact metric manifold} $M$ by the formula
 \begin{equation}
\chi :=\mathrm{div}_v \zeta=\nabla_i^g\zeta^i\,.
 \end{equation}
 The equality takes place because the canonical and Riemannian volume forms on $M$ coincide up to a constant factor \cite[Thm. 6]{sasaki1965almost}. 
 Integrating by parts in (\ref{Phif}), we get
 \begin{equation}\label{char}
     \Phi_f[a_0]=-(\chi, f(a_0))\,.
     \end{equation}
Eqs. (\ref{Phif}, \ref{char}) say that any smooth function of a quantizable classical observable $a_0$ must be orthogonal to the character of $M$. It would be interesting to clarify the geometric properties of the vector field $\zeta$ and implications of the conditions $\zeta=0$ or $\chi=0$.

\section{Sasakian manifolds and Wick property}\label{S5}
Now is the time to explain the adjective `Wick' in the name of the star product above. First, we recall the definition of the Wick deformation quantization on complex manifolds. Let $M$ be a complex manifold endowed with a $\ast$-product, that is, a formal associative deformation of the pointwise multiplication of complex-valued functions on $M$. Since the deformation is assumed to be local, one can restrict the $\ast$-product to
any open domain $U\subset M$. Then the $\ast$-product is said to be of {\it Wick type}\footnote{Another commonly used term is a {\it deformation quantization with separation of variables}  \cite{karabegov1996deformation}.} if for any holomorphic function $a$ and any antiholomorphic function $b$ on $U$ the following equalities hold:
\begin{equation}
    c\ast a=c\cdot a\,,\qquad b\ast c= b\cdot c\,.
\end{equation}
Here $c$ is an arbitrary smooth function on $U$.  Below, we will establish similar properties for the $\ast$-product (\ref{sp}) on Sasakian manifolds.   
This result is quite expected, given the analogy between K\"ahler and Sasakian geometries. Less obvious is the fact that any classical observable on a Sasakian manifold is quantizable.

Let us start with some auxiliary constructions and statements. First, we define the following two subspaces in the $\circ$-product algebra $\ww^\xi\otimes \Lambda$:
\begin{equation}
\begin{array}{l}
\mathrm{ L}=\big \{a\in\ww^\xi\otimes \Lambda \quad\big | \quad (b\circ a)|_{y=0}=0\,,\quad \forall b\in \ww^\xi\otimes \Lambda \big\}\,,\\[3mm]
     \mathrm{R}=\big \{a\in\ww^\xi\otimes \Lambda \quad\big | \quad (a\circ b)|_{y=0}=0\,,\quad \forall b\in \ww^\xi\otimes \Lambda \big\}\,.\end{array}
\end{equation}
Clearly,  $\mathrm L$ is a left ideal and $\mathrm R$ is a right ideal in $\ww^\xi\otimes \Lambda$. Then the intersection 
$\mathrm A=\mathrm L\cap\mathrm R$ is a subalgebra in $\ww^\xi\otimes \Lambda$, while the subspaces $\mathrm L$ and $\mathrm R$ are bi-modules over $\mathrm A$: 
\begin{equation}
 \mathrm A\circ \mathrm A\subset \mathrm A\,,\qquad   \mathrm  A\circ\mathrm  L\circ \mathrm A\subset \mathrm L\,,\qquad \mathrm A\circ \mathrm R\circ \mathrm A\subset \mathrm R\,.
 \end{equation}

\begin{lemma}\label{L41}
    The subspaces $\mathrm L$, $\mathrm R$, and $\mathrm A$ are invariant under the action of the operators $d_\nabla$ and $\delta^{-1}$. 
\end{lemma}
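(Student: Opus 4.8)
The plan is to reduce the invariance statements to explicit conditions on the Wick tensor $h^{ij}$ in a suitable frame, namely to the fact that the holomorphic and antiholomorphic distributions $\mathcal P^\ast$ and $\bar{\mathcal P}^\ast$ are the right and left kernels of $\hat h$, and then to check that both $d_\nabla$ and $\delta^{-1}$ respect this splitting. First I would give a concrete characterization of $\mathrm L$, $\mathrm R$, $\mathrm A$ purely in terms of the $y$-dependence: choosing a local frame adapted to the decomposition $\mathcal C_{\mathbb C}^\ast=\mathcal P^\ast\oplus\bar{\mathcal P}^\ast$ (so that $h^{ij}$ is `off-diagonal', pairing holomorphic indices with antiholomorphic ones only), one sees that $a\in\mathrm R$ iff $a$, as a formal power series in $y$, has no term of $y$-degree zero and, more precisely, vanishes when all antiholomorphic fibre variables are set to zero; dually for $\mathrm L$ with the holomorphic variables; and $\mathrm A$ consists of those $a$ vanishing whenever either all holomorphic or all antiholomorphic $y$'s are set to zero. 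This is because $(a\circ b)|_{y=0}$ is a sum of contractions of holomorphic-$y$-derivatives of $a$ against antiholomorphic-$y$-derivatives of $b$ (or vice versa), so demanding it vanish for all $b$ forces the stated vanishing of the holomorphic/antiholomorphic Taylor data of $a$.

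Next I would treat $d_\nabla$. The key point is that the contact metric connection respects $h$ (stated in the excerpt right after \eqref{h}, and again around \eqref{d}), hence parallel-transports $\mathcal P^\ast$ and $\bar{\mathcal P}^\ast$ into themselves. Consequently, in the adapted frame, the connection coefficients $\Gamma^k_{ij}$ are block-diagonal with respect to the holomorphic/antiholomorphic splitting of the $k,j$ indices (the Reeb direction being handled separately since $\nabla\xi=0$ and we work on $\ww^\xi$). The operator $d_\nabla a = dx^i\wedge(\partial_{x^i}a - y^j\Gamma^k_{ji}\partial_{y^k}a)$ therefore sends a series with no holomorphic $y$-content to one with no holomorphic $y$-content — the $\partial_{x^i}$ piece is harmless and the $y^j\Gamma^k_{ji}\partial_{y^k}$ piece replaces a $y^k$ (holomorphic, say) by $y^j\Gamma^k_{ji}$ with $j$ holomorphic — and similarly on the antiholomorphic side. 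Combined with the $y$-degree-raising structure this gives $d_\nabla \mathrm R\subset\mathrm R$, $d_\nabla\mathrm L\subset\mathrm L$, and hence $d_\nabla\mathrm A\subset\mathrm A$.

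For $\delta^{-1}$ the argument is similar but one must be a little careful: $\delta^{-1}$ is built from $\delta^\ast = y^iP_i^j i_{\partial/\partial x^j}$, which contracts a form leg $dx^j$ against the tangent vector $\partial/\partial x^j$ while multiplying by $y^i P^j_i$. Since $P^j_i=\delta^j_i-\xi^j\lambda_i$ is just the projector onto the contact distribution and commutes with $\hat\phi$, it is block-diagonal in the adapted frame; thus $\delta^\ast$ again does not mix holomorphic and antiholomorphic $y$-content. Moreover $\delta^\ast$ raises $y$-degree by one, so it cannot create a $y$-degree-zero term; that, together with the normalizing scalars $1/(m+n)$ and $1/(m+n-1)$, shows $\delta^{-1}$ preserves each of $\mathrm L$, $\mathrm R$ (and therefore their intersection $\mathrm A$). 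I would also note $\delta^{-1}$ maps $\xi$-transverse forms to $\xi$-transverse ones, as already recorded in the proof of Proposition \ref{cohomology}, so it acts on $\ww^\xi\otimes\Lambda$.

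\textbf{Main obstacle.} The substantive step is the first one: cleanly identifying $\mathrm L$, $\mathrm R$, $\mathrm A$ with vanishing conditions on the holomorphic/antiholomorphic Taylor coefficients, which requires unwinding the definition of the $\circ$-product and using that $\hat h$ has $\bar{\mathcal P}^\ast$ and $\mathcal P^\ast$ as its left and right kernels so that $h^{ij}$ nondegenerately pairs the two complementary sets of indices and nothing else. Once that normal form is in place, the invariance under $d_\nabla$ and $\delta^{-1}$ follows from block-diagonality of $\Gamma^k_{ij}$ (because $\nabla h=0$) and of $P^i_j$, together with the elementary fact that both operators raise $y$-degree or leave it unchanged in a way that cannot manufacture a $y=0$ term; I expect these latter checks to be routine index bookkeeping rather than a genuine difficulty.
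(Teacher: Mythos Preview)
Your approach is correct but genuinely different from the paper's. The paper never introduces an adapted frame or characterizes $\mathrm L$, $\mathrm R$ via the holomorphic/antiholomorphic Taylor data; instead it argues entirely frame-independently. For $d_\nabla$ it simply uses the graded Leibniz rule \eqref{L}: for $a\in\mathrm L$ and arbitrary $b$, the expression $(b\circ d_\nabla a)|_{y=0}$ is rewritten as $(-1)^n d\big((b\circ a)|_{y=0}\big)$, which vanishes. For $\delta^{-1}$ the paper reduces to showing that pointwise multiplication by any $y^i$ preserves $\mathrm L$, and checks this via the identity $y^i\cdot a = y^i\circ a + X^i a$ with $X^i=-\tfrac{\nu}{2}h^{ij}\partial_{y^j}$ a derivation of $\circ$; a short two-line computation then gives $(b\circ(y^i\cdot a))|_{y=0}=0$. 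No splitting, no block-diagonality, no choice of frame.

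Your route buys an explicit geometric picture of the ideals (monomials forced to carry at least one holomorphic, respectively antiholomorphic, $y$-factor), which is useful intuition and makes the later Lemmas~\ref{L42}--\ref{L43} feel inevitable. The price is the extra setup: you must pass to a non-coordinate frame (since $\mathcal C$ is nonintegrable), reinterpret $d_\nabla$ accordingly, and invoke nondegeneracy of the pairing $h^{\alpha\bar\beta}$ to get the equivalence of your Taylor-vanishing condition with the abstract definition of $\mathrm L,\mathrm R$. One small wrinkle in your $\delta^{-1}$ step: block-diagonality of $P$ is not really the operative fact. What matters is simply that $\delta^\ast$ multiplies by $y$'s (after contracting a form leg) and never removes them, so the condition ``every monomial contains an antiholomorphic $y$'' is trivially preserved---you do not need $P$ to respect the splitting at all. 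The paper's argument makes exactly this point without the frame apparatus.
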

\begin{proof}
    Let $a\in \mathrm L$ and $b\in \ww^\xi\otimes \Lambda_n$.  Using the definition of the covariant differential (\ref{d}) and the Leibniz rule (\ref{L}),  we can write
    \begin{equation}
    \begin{array}{c}
( b\circ d_\nabla a)|_{y=0}=(-1)^nd_\nabla (b\circ a)|_{y=0}-(-1)^n (d_\nabla b\circ a)|_{y=0}\\[3mm]
         =(-1)^n\left.\Big( d(b\circ a)-dx^i \wedge y^j\Gamma_{ji}^k \frac{\partial}{\partial y^k}(b\circ a)\Big)\right |_{y=0}=(-1)^n d\Big ((b\circ a)|_{y=0}\Big)  =0\,. \end{array}  
    \end{equation}
     In order to prove the invariance of $\mathrm L$ under $\delta^{-1}$ it is enough to show that $y^i\cdot a\in L$ for all $a\in L$. We start with the obvious identity
     \begin{equation}
         y^i\cdot a=y^i\circ a+X^i a\,,
     \end{equation}
     where 
     \begin{equation}
         X^i=-\frac{\nu}{2}h^{ij}\frac{\partial}{\partial y^j}\,.
     \end{equation}
Note that the operators $X^i$ differentiate the $\circ$-product. Therefore, we may proceed as follows: 
\begin{equation}
\begin{array}{rcl}
 b\circ (y^i\cdot a)|_{y=0}&=&(b\circ y^i)\circ a|_{y=0}+b\circ X^ia|_{y=0} \\[3mm]
     &=&  X^i(b\circ a)|_{y=0} - (X^i b)\circ a|_{y=0}\\[3mm]&=&y^i \cdot (b\circ a)|_{y=0}- (y^i\circ b)\circ a|_{y=0}=0\,.
     \end{array}
\end{equation}
This proves\footnote{In fact, we proved more. Since $\ww^\xi\otimes\Lambda$, viewed as graded commutative  algebra w.r.t. the dot product, is generated by the $y$'s,  $b\cdot a\in \mathrm L$ for all $a\in \mathrm L$ and $b\in \ww^\xi\otimes\Lambda$.} that $\mathrm L$ is invariant under the action of $\delta^{-1}$. The invariance of $\mathrm R$ is proved in the same way. Taken together, this means the invariance of the algebra $\mathrm A=\mathrm L\cap \mathrm R$. 
\end{proof}

\begin{lemma}\label{L42}
    On every contact metric manifold a contact metric curvature $R\in \mathrm A$. 
\end{lemma}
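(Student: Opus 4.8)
The statement to prove is that the curvature two-form $R = \frac14 R_{ijkl}\,y^i y^j\,dx^k\wedge dx^l$ lies in the subalgebra $\mathrm A = \mathrm L\cap \mathrm R$, i.e. that $(b\circ R)|_{y=0}=0$ and $(R\circ b)|_{y=0}=0$ for every $b\in\ww^\xi\otimes\Lambda$. The plan is to reduce these conditions to purely algebraic identities on the Wick tensor $h^{ij}$ contracted with the curvature $R_{ijkl}$, and then to extract those identities from the symmetry properties of the contact metric curvature together with the way $h$ interacts with the holomorphic/antiholomorphic splitting.

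First I would note that, because the $\circ$-product contracts $\partial/\partial y$'s against $\partial/\partial z$'s with the Wick tensor, the expression $(b\circ R)|_{y=0}$ only sees the terms in $b$ of $y$-degree $\leq 2$ (higher terms leave a leftover $y$ after two contractions, and the $y$-degree-$2$ piece of $R$ is the whole thing). Writing $b = b_0 + b_i y^i + b_{ij}y^i y^j + \cdots$ with $\xi$-transverse coefficients, a direct expansion of the exponential gives $(b\circ R)|_{y=0}$ as a sum of a term proportional to $h^{im}h^{jn}R_{ijkl}\,b_{mn}\,dx^k\wedge dx^l$ (the double contraction) and a term proportional to $h^{im}R_{ijkl}\,b_m\, (\text{something})$ — but the single-contraction term carries a leftover $y^j$, so it must vanish separately; since $b_m$ is arbitrary this forces $h^{im}R_{ijkl}=0$ as a would-be obstruction, and the double-contraction term similarly reduces to needing $h^{im}h^{jn}R_{ijkl}=0$. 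So the whole lemma comes down to the single algebraic claim
$$
h^{im}R_{mjkl}=0\,,
$$
and symmetrically $h^{jm}R_{imkl}=0$ for the right-ideal condition. (One should double-check the ordering: for $\mathrm R$ one contracts $R$'s $y$-derivatives against $b$'s $z$-derivatives, which hits the \emph{first} index slot of $R$, giving $h^{mi}R_{ijkl}$; Hermiticity of $h$ and the symmetry $R_{ijkl}=R_{jikl}$ make the two conditions equivalent.)

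Next I would prove $h^{im}R_{mjkl}=0$. Recall $h^{ij}=\bar g^{ij}+i\pi^{ij}$ from $(\ref{h})$, so the claim splits into the real part $\bar g^{im}R_{mjkl} + (\text{contraction of }\pi\text{ that reproduces }\phi) = 0$; more precisely, using $\pi^{im}=\bar g^{in}\phi^{tm}_{\ n}$ one gets $h^{im} = \bar g^{in}(\delta^m_n + i\phi^m_n)$ up to raising conventions, so $h^{im}R_{mjkl}$ is proportional to $R^i_{\ jkl} + i\,\phi^i_{\ m}R^m_{\ jkl}$ (indices raised with the metric; here I use that $\nabla$ is the contact metric connection, so raising/lowering commutes with $\nabla$ and with $R$). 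Now invoke $\nabla\phi=0$ from $(\ref{cmc})$: differentiating $\phi$ twice and antisymmetrizing, the Ricci identity gives $\phi^i_{\ m}R^m_{\ jkl} = R^i_{\ mkl}\phi^m_{\ j}$, i.e. the curvature endomorphism commutes with $\phi$. Combined with the identity already recorded in the excerpt, $g^{in}R^j_{\ nkl} + g^{jn}R^i_{\ nkl}=0$ (the curvature of a metric connection is skew in the first pair of raised indices), one concludes that $R^i_{\ jkl} + i\phi^i_{\ m}R^m_{\ jkl}$ annihilates the holomorphic subbundle $\mathcal P$, which is exactly the statement that $h^{im}R_{mjkl}=0$ because $\mathcal P^\ast$ is the right kernel of $\hat h$. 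I would write this step as: $h$ has kernel $\mathcal P^\ast$, the curvature operator $R_{kl}:=R^\bullet_{\ \bullet kl}$ commutes with $\phi$ (by $\nabla\phi=0$) and is skew-symmetric with respect to $g$ (the displayed identity), hence $R_{kl}$ preserves the eigenbundle decomposition $\mathcal C_{\mathbb C}=\mathcal P\oplus\bar{\mathcal P}$ and therefore $h(R_{kl}X,Y)=\pm h(X,R_{kl}Y)$-type manipulations give $h\circ R_{kl}=0$ on the relevant slot.

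The main obstacle I expect is the bookkeeping in the first paragraph: carefully verifying that $(b\circ R)|_{y=0}$ really does reduce to the double-contraction term with \emph{no} surviving single-contraction contribution, and that the $\lambda$-part of $b$ (the $\Lambda^\lambda$ summand) contributes nothing either — this last point uses $\delta^\ast\lambda=0$ in spirit, or more directly that $R\in\ww^\xi$ so all its $y$'s are already $\xi$-transverse and the Wick tensor itself satisfies $h^{ij}\xi_i = h^{ij}\lambda_i=0$ on the appropriate index, killing any coupling to $\lambda b$. Once the reduction to $h^{im}R_{mjkl}=0$ is in hand, the algebra is short. I would also remark, as a sanity check, that the analogous statement for the torsion term $T$ is precisely Lemma~\ref{L42}'s companion (the next lemma in the paper), and that $T\notin\mathrm A$ in general — which is consistent with the fact that the obstructions to quantization are governed by $T$, not $R$; this is why $R\in\mathrm A$ is exactly the structural input needed to later show the Wick separation-of-variables property on Sasakian manifolds where the relevant part of $T$ also lands in $\mathrm A$.
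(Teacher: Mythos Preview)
Your reduction step contains a genuine error that derails the whole argument. When you expand $(b\circ R)|_{y=0}$, the single-contraction term indeed carries a leftover factor of $y^j$; but that means it is \emph{automatically} killed by the evaluation at $y=0$, not that it ``must vanish separately'' as an independent constraint. So the only condition that survives is the \emph{double} contraction
\[
h^{ip}h^{jq}R_{ijkl}=0
\]
(and its complex conjugate for the other ideal). You instead conclude that the lemma reduces to the single-contraction identity $h^{im}R_{mjkl}=0$ and then try to prove that. This stronger identity is \emph{false} on a general contact metric manifold: writing $h^{im}\omega_{mn}=\phi^i_{\ n}+iP^i_{\ n}$ and using that $R^n_{\ jkl}$ is $\xi$-transverse, one finds $h^{im}R_{mjkl}=i\big(R^i_{\ jkl}-i\phi^i_{\ n}R^n_{\ jkl}\big)$. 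Requiring this to vanish (and its conjugate, which you need for the other ideal) forces $g^{im}R_{mjkl}=\phi^i_{\ n}R^n_{\ jkl}=0$, hence $R^n_{\ jkl}=0$ since $\phi$ is invertible on $\xi$-transverse vectors. So your target identity holds only on flat manifolds, and the vague ``$h(R_{kl}X,Y)=\pm h(X,R_{kl}Y)$-type manipulations'' cannot rescue it.

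The paper's proof establishes precisely the double-contraction identity, and does so by a shorter route than the $\nabla\phi=0$ argument you sketch. One first checks the purely algebraic fact $\omega_{ij}h^{in}h^{jm}=0$ (immediate from $h^{ij}=\bar g^{ij}+i\pi^{ij}$ and Rels.~(\ref{algrel}), (\ref{pi})). Then, since the contact metric connection parallelizes the Wick tensor, the Ricci identity applied to $h$ itself gives
\[
0=[\nabla_k,\nabla_l]h^{ij}=R^i_{\ nkl}h^{nj}+R^j_{\ nkl}h^{in}\,.
\]
Contracting this with $\omega_{pi}h^{pm}$ and using the algebraic identity kills the second summand and leaves $R_{pnkl}h^{pm}h^{nj}=0$, which is exactly what makes $(R\circ b)|_{y=0}$ vanish. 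The conjugate identity gives $R\in\mathrm{L}$. Your instinct to use $\nabla\phi=0$ (equivalently, that the curvature operator commutes with $\phi$) is in the right spirit, but it should be packaged as $\nabla h=0$ and fed into the \emph{double} contraction, not the single one.
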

\begin{proof}
    Using Rels. (\ref{algrel}), (\ref{pi}), and (\ref{h}), one can easily verify that
    \begin{equation}
        \omega_{ij}h^{in}h^{jm}=0\,,\qquad h^{ni}h^{mj}\omega_{ij}=0\,.
    \end{equation}
    The form $h$ being Hermitian, the equalities are obtained from each other by complex conjugation. By the definition of a contact metric connection,   
    \begin{equation}
       0= [\nabla_k,\nabla_l]h^{ij}=R^i_{nkl}h^{nj}+R^j_{nkl}h^{in}  \,.         
        \end{equation}
        As a consequence,
        \begin{equation}
            \omega_{pi}(R^i_{nkl}h^{nj}+R^j_{nkl}h^{in}  ) h^{pm}=R_{pnkl}  h^{pm}h^{nj}+R^j_{nkl}(\omega_{pi}  h^{pm}h^{in} )
            =R_{pnkl}  h^{pm}h^{nj}=0\,.            
        \end{equation}
            On the other hand, 
            \begin{equation}
                (R\circ b)|_{y=0}=\frac 14 \left (\frac{\nu}{2}\right)^2R_{ijkl}dx^k\wedge dx^l h^{in}h^{jm}\left.\frac{\partial^2 b}{\partial y^n\partial y^m}\right|_{y=0}=0\,.
            \end{equation}
This means $R\in \mathrm R$. Similarly, one can see that $R\in \mathrm L$; and hence, $R\in \mathrm A=\mathrm{L}\cap \mathrm{R}$.
\end{proof}

\begin{lemma}\label{L43}
    On every Sasakian manifold $\delta^{-1}T\in \mathrm{A}$. 
\end{lemma}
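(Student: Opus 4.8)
The plan is to reduce the statement to a pointwise algebraic identity about the torsion tensor $T_{kij}$ and the Wick tensor $h^{ij}$, exactly in the spirit of the proof of Lemma \ref{L42}. Recall that $\delta^{-1}T$ differs from $\delta^\ast T=y^iP_i^j\,i_{\partial/\partial x^j}T$ only by an overall numerical factor coming from the $\nu$-degree, so $\delta^{-1}T$ is a linear combination of monomials of the schematic form $y^ky^l\,T_{kij}P^i_m\,dx^j$ (after the contraction $y^aP_a^b i_{\partial/\partial x^b}$ acts on $\frac12 T_{kij}y^kdx^i\wedge dx^j$). Thus $\delta^{-1}T\in\ww^\xi_1\otimes\Lambda_1$ is quadratic in $y$ and one-form valued. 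To test membership in $\mathrm{R}$ (resp. $\mathrm{L}$) I would compute $(\delta^{-1}T\circ b)|_{y=0}$ (resp. $(b\circ\delta^{-1}T)|_{y=0}$) for arbitrary $b\in\ww^\xi\otimes\Lambda$: because $\delta^{-1}T$ has $y$-degree $2$, only the second-order term in the exponential $\circ$-product survives, so $(\delta^{-1}T\circ b)|_{y=0}$ is proportional to $T_{kij}\,P^k_a P^l_b\,h^{an}h^{bm}\cdots$ contracted with $\partial^2b/\partial y^n\partial y^m|_{y=0}$, together with possibly a first-order term if any lower $y$-degree piece were present (it is not). So everything comes down to showing that the symmetrized contraction $T_{kij}h^{kn}h^{lm}$ (with the appropriate index placements dictated by $\delta^{-1}T$) vanishes, and its conjugate for the other ideal.

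The next step is to translate the Sasakian hypothesis into a usable form for the torsion of the \emph{contact metric} connection $\nabla=\nabla^g+S$. From the potential formula \eqref{S} and the Sasakian identity \eqref{Sas}, $2\nabla^g_k\omega_{ij}=\lambda_ig_{kj}-\lambda_jg_{ki}$, I would derive an explicit closed expression for $S^i_{jk}$ and hence for $T^i_{jk}=S^i_{jk}-S^i_{kj}$ and its covariant version $T_{kij}=\omega_{kn}T^n_{ij}$. I expect $T_{kij}$ on a Sasakian manifold to be built purely out of $\lambda$, $g$ (equivalently $\omega$ and $\phi$), and $\xi$ — something like $T_{kij}\sim \lambda_i g_{kj}-\lambda_j g_{ki}$ plus terms with $\omega$ and $\lambda$ — with no "generic" piece. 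Once $T_{kij}$ is written in these terms, the required contractions with $h^{ni}h^{mj}$ reduce to the two identities already proved in Lemma \ref{L42}, namely $\omega_{ij}h^{ni}h^{mj}=0$ and its conjugate, together with the evident facts $\lambda_i h^{ij}=0$ (since $h$ restricted to $\mathcal C^\ast$ is what matters and $\lambda$ pairs trivially — this follows from \eqref{h}, \eqref{pi} and $\phi(\xi)=0$, $\lambda\circ\phi=0$) and $g_{ij}h^{ni}h^{mj}=h^{nm}$ up to the Hermitian/antisymmetric split. Pushing these through should kill the holomorphic-side contraction; complex conjugation then handles the antiholomorphic side, giving $\delta^{-1}T\in\mathrm L\cap\mathrm R=\mathrm A$. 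One should also note $i_\xi(\delta^{-1}T)=0$ and $\xi^i\partial_{y^i}(\delta^{-1}T)=0$ so that the object indeed lives in $\ww^\xi\otimes\Lambda$, which is immediate from the factor $P^i_j$ in $\delta^\ast$ and from $T_{kij}\xi^k\cdots$ being handled by the structure of $T$.

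The main obstacle I anticipate is purely computational bookkeeping rather than conceptual: correctly assembling $S^i_{jk}$ from \eqref{S} under the Sasakian constraint, keeping track of which indices on $T_{kij}$ get contracted with $y$'s versus with $h$'s in $\delta^{-1}T$, and making sure the $P^i_j$ projectors and the $\xi$-transversality do not spoil the vanishing. A secondary subtlety is that $\delta^{-1}T$, unlike $T$ or $R$, is not manifestly symmetric in a way that makes the $\circ$-product contraction obviously symmetric in $n,m$; one must use the cyclic torsion identity $T_{kij}+T_{ijk}+T_{jki}=0$ from \eqref{urel} to rearrange terms before applying the Lemma \ref{L42} identities. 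If the naive contraction does not vanish term-by-term, the cyclic identity plus the antisymmetry $h^{[ni]}=i\pi^{ni}$, $h^{(ni)}=\bar g^{ni}$ should be exactly what is needed to complete the cancellation. I would therefore organize the computation around first simplifying $T_{kij}$ on Sasakian manifolds, and only then feeding it into the $\circ$-product test.
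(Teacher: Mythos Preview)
Your overall strategy is exactly the paper's: compute $T_{kij}$ explicitly using the Sasakian identity \eqref{Sas} (via the potential $S$), feed the result into $\delta^{-1}T$, and then test membership in $\mathrm L$ and $\mathrm R$ by contracting with two copies of the Wick tensor, just as in Lemma~\ref{L42}. The paper finds $T_{ijk}=\tfrac12(\lambda_kg_{ij}-\lambda_jg_{ik})$, hence $\delta^{-1}T=\tfrac14 y^iy^j(g_{ij}-\lambda_i\lambda_j)\lambda$, and the cyclic torsion identity is never needed.

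Two of your ``evident facts'' are wrong, however, and would derail the computation as you have written it. First, $\lambda_i h^{ij}\neq 0$: from $h^{ij}=\bar g^{ij}+i\pi^{ij}$ together with $\lambda_i\bar g^{ij}=\xi^j$ and $\lambda_i\pi^{ij}=0$ one gets $\lambda_i h^{ij}=\xi^j$. Second, $g_{ij}h^{ni}h^{mj}$ is not $h^{nm}$ up to any split; the relevant combination is $(g_{ij}-\lambda_i\lambda_j)h^{in}h^{jm}$, and it is this difference that contracts to something proportional to $\xi^n\xi^m$ (or zero), which then kills $\partial^2 b/\partial y^n\partial y^m$ only because $b\in\ww^\xi$. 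In other words, the vanishing does \emph{not} come from $\lambda$ annihilating $h$; it comes from the $\xi$-transversality of the test element $b$. If you redo the algebra with $\lambda_ih^{ij}=\xi^j$ and track the $g-\lambda\otimes\lambda$ combination that actually appears in $\delta^{-1}T$, your plan goes through and matches the paper's proof.
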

\begin{proof}
Starting from definition (\ref{S}), we first compute the covariant potential tensor
\begin{equation}
    \begin{array}{rcl}
     S_{ijk}:= \omega_{in}S^n_{jk} &=&\displaystyle -\frac12\omega_{in}\phi^n_m\nabla^g_k\phi^m_j-\frac12\omega_{in}\lambda_j\nabla_k^g\xi^n \\ [3mm] 
   &=& \displaystyle\frac12(g_{in}-\lambda_i\lambda_n)\nabla^g_k\phi^n_j+\frac12 g_{nm}\phi^m_i\lambda_j\nabla^g_k\xi^n \\[3mm]&=&\displaystyle 
   \frac12\nabla^g_k\omega_{ij}-\frac12\lambda_i\lambda_n\nabla^g_k\phi^n_j+\frac12 \phi^n_i\lambda_j\nabla_k^g\lambda_n\\[3mm]
   &=&\displaystyle 
   \frac12\nabla^g_k\omega_{ij}+\frac12\lambda_i\phi^n_j\nabla^g_k\lambda_n+\frac12\lambda_j\phi^n_i\nabla_k^g\lambda_n\,.   
    \end{array}
\end{equation}
Here we used the algebraic identities $\omega_{ik}\phi^k_j=-g_{ij}+\lambda_i\lambda_j$ and $\phi_i^j\lambda_j=0$. Since every Sasakian manifold is automatically K-contact,
\begin{equation}
    \nabla^g_i\lambda_j=\frac12\omega_{ij}
\end{equation}
and we can write
\begin{equation}
    \begin{array}{rcl}
        S_{ijk}&=&\displaystyle \frac12\nabla^g_k\omega_{ij}+\frac14\lambda_i\phi^n_j\omega_{kn}+\frac14\lambda_j\phi^n_i\omega_{kn} \\[3mm]    
        &=&\displaystyle \frac12\nabla^g_k\omega_{ij}+\frac14(\lambda_i\lambda_j\lambda_k -\lambda_i g_{jk}-\lambda_jg_{ik})\,.   
    \end{array}
\end{equation}
Combining the last formula with the identity 
\begin{equation}
    \nabla^g_i\omega_{jk}+\nabla^g_j\omega_{ki}+\nabla^g_k\omega_{ij}=0\,,
    \end{equation}
we get the following expression for the covariant torsion tensor on $K$-contact manifolds:
\begin{equation}\label{TS}
    T_{ijk}=S_{ijk}-S_{ikj}=\frac12\nabla_i^g\omega_{kj}-\frac14(\lambda_jg_{ki}-\lambda_kg_{ji})\,.
\end{equation}
Using the defining property of Sasakian manifolds (\ref{Sas}), we can simplify the last expression to 
\begin{equation}
    T_{ijk}=\frac12(\lambda_kg_{ij}-\lambda_jg_{ik})\,.
\end{equation}
Hence,
\begin{equation}
    T=\frac12 y^i T_{ijk}dx^j\wedge dx^k=\frac12 y^ig_{ij}dx^j\wedge \lambda\quad \mbox{and}\quad \delta^{-1}T=\frac14y^iy^j(g_{ij}-\lambda_i\lambda_j)\lambda\,.
\end{equation}
Notice that
\begin{equation}
 g_{ij}h^{in}h^{jm}=h^{ni}h^{mj}g_{ij}=0\,, \qquad \lambda_i\lambda_j h^{in}h^{jm}=h^{ni}h^{mj}\lambda_i\lambda_j=\xi^n\xi^m\,.
    \end{equation}
Now we are ready to show that $\delta^{-1}T\in \mathrm{R}$. Indeed, 
\begin{equation}
    (\delta^{-1}T\circ b)|_{y=0}=\frac14\left(\frac{\nu}{2}\right)^2 \lambda (g_{ij}-\lambda_i\lambda_j)h^{in}h^{jm}\left.\frac{\partial^2 b}{\partial y^n\partial y^m}\right|_{y=0}=0\end{equation}
for all $b\in \ww^\xi\otimes \Lambda$. In the same way, one can see that $\delta^{-1}T\in \mathrm{L}$. Therefore, $\delta^{-1}T\in \mathrm{A}$. 

\end{proof}

As mentioned in Sec. \ref{Sec2}, every Sasakian manifold carries a CR structure. This is given by  the integrable  (anti)holomorphic distribution (\ref{PC}).    Let $U$  be an open domain in $M$. A complex-valued function $a\in C_{\mathbb{C}}^\infty(U)$ is said to be {\it CR-holomorphic} (resp. {\it CR-antiholomorphic}) if it is annihilated by all vector fields belonging to the antiholomorphic distribution $\bar{\mathcal{P}}|_U$ (resp. $\mathcal P|_U$). By $\mathbb{C}[\![\nu]\!]$-linearity, these definition extends to the space  $C_{\mathbb{C}}^\infty(U)[\![\nu]\!]$. Of particular interest to us will be CR-(anti)holomorphic functions that are $\xi$-invariant. It is easy to see that any $\xi$-invariant CR-holomorphic (resp. -antiholomorphic) on $U$ function $a$ satisfies the equations 
\begin{equation}
    h^{ji}\frac{\partial a}{\partial x^i}\Big|_U=0\qquad \Big(\mbox{resp.}\quad \frac{\partial a}{\partial x^i}h^{ij}\Big|_U=0\Big)\,.
\end{equation}

\begin{theorem} Let $U$ be an open domain in a Sasakian manifold $M$. Consider the restriction of $\ast$ and $\Delta$ on $U\subset M$. Then any $\xi$-invariant CR-holomorphic (resp. -antiholomorphic) function  $a$ on $U$ is a quantum observable, i.e., $\Delta a=0$. Moreover, the Wick property holds
\begin{equation}\label{WP}
    c\ast a=c\cdot a\qquad (\mbox{resp.} \quad a\ast c=a\cdot c )
\end{equation}
for any quantum observable $c$ on $U$.

\end{theorem}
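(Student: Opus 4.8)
The plan is to reduce both assertions to two structural facts and then read them off. The first fact is that on a Sasakian manifold the one-form $r$ fixing Fedosov's abelian connection $D$ lies in $\mathrm{A}=\mathrm{L}\cap\mathrm{R}$; the second is that for a $\xi$-invariant CR-holomorphic function $a$ on $U$ the quantum lift differs from $a$ by an element of the left ideal $\mathrm{L}$, that is, $Qa-a\in\mathrm{L}$ (and, in the antiholomorphic case, $Qa-a\in\mathrm{R}$). Everything is carried out over $U$, all the operators involved being local.

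First I would establish that $r\in\mathrm{A}$. By Lemma \ref{L42} we have $R\in\mathrm{A}$, hence $\delta^{-1}R\in\mathrm{A}$ by Lemma \ref{L41}; by Lemma \ref{L43}, $\delta^{-1}T\in\mathrm{A}$. Thus the seed $\delta^{-1}(T+R)$ of the recursion (\ref{EqForr}) lies in $\mathrm{A}$. Since $\mathrm{A}$ is a $\circ$-subalgebra invariant under $d_\nabla$ and $\delta^{-1}$ (Lemma \ref{L41}), and $\delta^{-1}$ strictly raises the $\nu$-degree, every iterate of (\ref{EqForr}) stays in $\mathrm{A}$; as $\mathrm{A}$ is closed in the filtration topology, the solution $r$ belongs to $\mathrm{A}$. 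I expect this to be the main obstacle: it is the only point where the Sasakian hypothesis is genuinely needed (through Lemma \ref{L43}; on a general contact metric manifold $\delta^{-1}T$ need not lie in $\mathrm{A}$), and it is precisely what makes the one-sided bookkeeping in the next step go through.

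Next I would prove that $Qa-a\in\mathrm{L}$ for a $\xi$-invariant CR-holomorphic $a$, which satisfies $h^{ji}\partial_i a=0$. Viewed as a constant section of the Wick bundle, $a$ obeys $d_\nabla a=da$ and $[r,a]=0$ (a scalar function is $\circ$-central). The defining equation (\ref{b-a}) for $b=Qa$ then becomes $b=a+\delta^{-1}\big(da+d_\nabla(b-a)+\frac{i}{\nu}[r,b-a]\big)$, solved by iteration from $b=a$. The first correction is $\delta^{-1}(da)=y^jP^i_j\nabla_i a$ as in (\ref{a}); a short computation, using $h^{kl}\lambda_l=\xi^k$ and $\xi a=0$ to identify $h^{kl}P^i_l\nabla_i a$ with $h^{ki}\partial_i a$, shows that this correction lies in $\mathrm{L}$ precisely because of the CR-holomorphy of $a$. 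For the inductive step, if $b-a\in\mathrm{L}$ then $d_\nabla(b-a)\in\mathrm{L}$ by Lemma \ref{L41}, while $[r,b-a]=r\circ(b-a)-(b-a)\circ r\in\mathrm{L}$ — the first term because $\mathrm{L}$ is a left ideal, the second because $r\in\mathrm{R}$, so that anything composed on the right with $r$ lands in $\mathrm{L}$ — and applying $\delta^{-1}$ keeps the result in $\mathrm{L}$. Since the iteration converges in the closed subspace $\mathrm{L}$, we obtain $Qa-a\in\mathrm{L}$. The CR-antiholomorphic case is identical with left/right and $\mathrm{L}/\mathrm{R}$ interchanged (now using $r\in\mathrm{L}$), giving $Qa-a\in\mathrm{R}$.

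Finally I would read off the two conclusions, writing $Qa=a+\ell$ with $\ell\in\mathrm{L}$. Combining (\ref{b-a}) with the homotopy identity (\ref{HDR}) and $\Pi\delta^{-1}=0$ gives $\Pi DQa=\Pi(D+\delta)Qa=\Pi\big(da+d_\nabla\ell+\frac{i}{\nu}[r,\ell]\big)$. Here $\Pi(da)=0$, since $da$ has vanishing restriction to $y=0,\ dx=0$ and $i_\xi da=\xi a=0$; and $\Pi$ annihilates all of $\mathrm{L}$, since every element of $\mathrm{L}$ has vanishing restriction to $y=0$ (take $b=1$ in the definition of $\mathrm{L}$) and $\Pi$ is built from that restriction. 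As $d_\nabla\ell+\frac{i}{\nu}[r,\ell]\in\mathrm{L}$, this yields $\Pi DQa=0$, hence $\Delta a=i_\xi\Pi DQa=0$ by the definition (\ref{PDQ}) of $\Delta$, i.e. $a$ is a quantum observable. For the Wick property, for any quantum observable $c$ on $U$ I would compute, using (\ref{sp}), $c\ast a=\Pi(Qc\circ Qa)=\Pi(Qc\circ a)+\Pi(Qc\circ\ell)$; the second term vanishes because $Qc\circ\ell\in\mathrm{L}$ ($\mathrm{L}$ being a left ideal) and $\Pi$ kills $\mathrm{L}$, while in the first term $Qc\circ a=Qc\cdot a$ (the scalar factor $a$ multiplies pointwise), so $\Pi(Qc\cdot a)=(\Pi Qc)\,a=c\cdot a$; altogether $c\ast a=c\cdot a$. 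The antiholomorphic statement follows the same way from $Qa-a\in\mathrm{R}$: $a\ast c=\Pi(Qa\circ Qc)=\Pi(a\circ Qc)+\Pi\big((Qa-a)\circ Qc\big)=a\cdot c$, using that $\mathrm{R}$ is a right ideal annihilated by $\Pi$.
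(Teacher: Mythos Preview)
Your proof is correct and follows the same strategy as the paper: use Lemmas \ref{L41}--\ref{L43} to place $r$ in $\mathrm A$, show by induction along the Fedosov recursion that $Qa-a\in\mathrm L$, and then read off $\Delta a=0$ and the Wick identity from the fact that $\Pi$ annihilates $\mathrm L$. One small slip: in the inductive step you justify $(b-a)\circ r\in\mathrm L$ by ``$r\in\mathrm R$'', but what is actually needed is $r\in\mathrm L$ (equivalently, the $\mathrm A$-bimodule property $\mathrm L\circ\mathrm A\subset\mathrm L$), since for $x\circ r\in\mathrm L$ one must have $((c\circ x)\circ r)|_{y=0}=0$ for all $c$; this holds because $r\in\mathrm A\subset\mathrm L$, so the conclusion stands.
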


\begin{proof}
    The quantum lift  (\ref{Q}) of a CR-holomorphic function $a$ is determined by the Fedosov connection (\ref{fc}), where $r$ is found recursively from  Eq. (\ref{EqForr}). By Lemmas \ref{L41} -- \ref{L43}, $\delta^{-1}(T+R)\in \mathrm{A}$. Since $d_\nabla$ and $\delta^{-1}$ leave invariant the $\circ$-subalgebra $\mathrm{A}$, iteration of Eq. (\ref{EqForr}) gives a unique solution $r$ belonging to $\mathrm{A}$.  Let us expand both $r$ and  $Qa$ according to the $\nu$-degree:

\begin{equation}
    r=r_2+r_3+r_4+\cdots\,,\qquad Qa=a+a_1+a_2+\cdots\,.
\end{equation}
The expansion for $Qa$ follows from the recurrence relations 
\begin{equation}\label{aL}
    a_1=\delta^{-1}d_\nabla a\,, \qquad a_{n+1}=\delta^{-1}\Big(d_\nabla a_n +\frac{i}{\nu}\sum_{k=1}^n [r_{n-k+2}, a_k]\Big)\,,\quad n=1,2,\ldots
\end{equation}
We claim that $a_1\in \mathrm L$. Indeed, $a_1=y^i\nabla_i a$ and
\begin{equation}
    (c\circ a_1)|_{y=0}= \frac{\nu}{2}\frac{\partial c}{\partial y^j} h^{ji}\frac{\partial a}{\partial x^i}\Big|_{y=0}=0\qquad \forall c\in \ww^\xi\otimes \Lambda\,,
\end{equation}
as $a$ is $\xi$-invariant and CR-holomorphic on $U$.  Since $r_k\in \mathrm{A}$ and the $\mathrm A$-bimodule  $\mathrm{L}$ is invariant under the action of $d_\nabla$ and $\delta^{-1}$, Eq. (\ref{aL}) implies that all $a_k\in \mathrm{L}$. 

By Theorem \ref{T35}, $Qa$ is a flat section of $\ww_D$ iff
\begin{equation}
    \Delta a=i_\xi \Pi D Q a=(D_\xi Qa)|_{y=0}=0\,.
\end{equation}
Explicitly,
\begin{equation}
    \Delta a=\xi a-\sum_{k=1}^\infty \xi^i\frac{\partial a_k}{\partial y^i} \Big|_{y=0}+i_\xi \sum_{k=1}^\infty \Big(  d_\nabla a_k+\frac{i}{\nu}[r, a_k]\Big)\Big|_{y=0}=0\,.
\end{equation}
Here the first summand vanishes due to the $\xi$-invariance of $a$, the first sum is zero as all $a_k$'s are $\xi$-transversal, and 
the second sum goes to zero as an element of $\mathrm{L}$.     

It remains to prove the Wick property (\ref{WP}). Let $c$ be an arbitrary quantum observable, then
\begin{equation}
    c\ast a=\Pi (Qc\circ Qa)=c\cdot a+\sum_{k=1}^\infty ( Qc\circ a_k)|_{y=0}=c \cdot a
\end{equation}
because all $a_k\in \mathrm L$. The proof for CR-antiholomorphic functions is analogous. 

\end{proof}
It follows from the theorem above that for every CR-holomorphic function $a$ and every CR-antiholomorphic function $b$ on $U\subset M$, the product $b\ast a=b\cdot a$ is a quantum observable, that is, $\Delta (ab)=0$. On the other hand, the products of $\xi$-invariant CR-holomorphic and -antiholomorphic functions generate the whole space of $\xi$-invariant functions on $U$. This means that the operator $\Delta$, being local,  must be of the form $\Delta=B\xi$ for some formal differential operator $B=1+\mathcal{O}(\nu)$. Thus, we arrive at 

\begin{cor}
    On every Sasakian manifold $M$, all classical observables -- the elements of $C^\infty_\xi(M)$ -- are quantizable without any quantum corrections. 
\end{cor}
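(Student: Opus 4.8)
The plan is to derive the corollary directly from the theorem by exploiting the locality of $\Delta$ together with the fact that $\xi$-invariant CR-holomorphic and -antiholomorphic functions generate $C^\infty_\xi(U)$ for any chart $U$. First I would fix a point $p\in M$ and a small coordinate domain $U\ni p$ on which the integrable (anti)holomorphic distributions $\mathcal P,\bar{\mathcal P}$ admit local frames; by the Newlander--Nirenberg-type theorem for CR structures (available since $M$ is Sasakian and hence CR-integrable), one can choose adapted local coordinates so that $U$ carries an abundance of $\xi$-invariant CR-holomorphic functions $z^1,\dots,z^m$ and their conjugates $\bar z^1,\dots,\bar z^m$, together with a coordinate along the Reeb flow. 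The key algebraic input is that every $\xi$-invariant smooth function on $U$ is a (uniform limit of, or in the formal setting, an element of the algebra generated by) functions of the $z$'s and $\bar z$'s --- more precisely, that finite sums $\sum b_\alpha\ast a_\alpha$ with $a_\alpha$ CR-holomorphic and $b_\alpha$ CR-antiholomorphic span $C^\infty_\xi(U)[\![\nu]\!]$ as a vector space.

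Next I would feed this into the theorem. By the theorem, for each CR-antiholomorphic $b$ and CR-holomorphic $a$ on $U$ we have $\Delta(ba)=\Delta(b\ast a)=0$, since $b\ast a=b\cdot a$ and the $\ast$-product of two quantum observables is again a quantum observable (property (2) of Definition \ref{D41}, equivalently $\ww_D$ is an algebra). Hence $\Delta$ annihilates a spanning set of $C^\infty_\xi(U)$, so $\Delta$ vanishes identically on $C^\infty_\xi(U)[\![\nu]\!]$. Because $\Delta$ is a formal differential operator of the shape $\Delta=\xi+\mathcal O(\nu)$ that is local, the fact that it kills every $\xi$-invariant function forces it to factor through $\xi$: writing $\Delta=\sum_k\nu^k\Delta_k$ with $\Delta_0=\xi$, the vanishing on $\ker\xi$ means each $\Delta_k$ descends to an operator on the quotient $C^\infty(U)/C^\infty_\xi(U)$, which --- since $\xi$ is nowhere zero --- identifies with $\mathrm{im}\,\xi$; this yields $\Delta=B\xi$ for a formal differential operator $B=1+\mathcal O(\nu)$, and $B$ is invertible in $\mathbb C[\![\nu]\!]$-linear formal differential operators because its leading term is the identity.

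Finally I would conclude the quantizability statement. Given any classical observable $a_0\in C^\infty_\xi(M)$, we must solve $\Delta a=0$ with $a=a_0+\mathcal O(\nu)$; but $\Delta=B\xi$ with $B$ invertible, so $\Delta a=0$ is equivalent to $\xi a=0$, i.e.\ $a\in C^\infty_\xi(M)[\![\nu]\!]$. Thus $a=a_0$ (all higher corrections zero) already solves the equation, proving that every classical observable is quantizable with no quantum corrections. I expect the only genuine subtlety to be the claim that $\xi$-invariant CR-holomorphic and -antiholomorphic functions generate $C^\infty_\xi(U)$ locally: one needs CR-integrability to produce enough such functions (this is exactly where the Sasakian hypothesis, beyond mere K-contact, is used), and one needs a careful local normal-form argument --- parametrize $U$ by $(z,\bar z,t)$ with $t$ the Reeb parameter and $z$ transverse CR-holomorphic coordinates, so that $\xi$-invariance means independence of $t$ and Stone--Weierstrass (or a formal polynomial-algebra version) gives the generation. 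The passage from ``$\Delta$ kills $\ker\xi$'' to ``$\Delta=B\xi$'' is routine once locality is invoked, but should be stated with the observation that $\xi$ having no zeros lets one complete $\xi$ to a coordinate vector field locally, making the factorization manifest.
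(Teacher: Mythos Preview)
Your approach is correct and essentially identical to the paper's: use the theorem to get $\Delta(b\cdot a)=\Delta(b\ast a)=0$ for $a$ CR-holomorphic and $b$ CR-antiholomorphic, note that such products locally generate $C^\infty_\xi(U)$, and conclude from locality that $\Delta=B\xi$ with $B=1+\mathcal O(\nu)$, whence $\Delta a_0=0$ for every $a_0\in C^\infty_\xi(M)$. The paper compresses this into a single paragraph and simply asserts the generation claim and the factorization $\Delta=B\xi$, while you spell out where CR-integrability and the nonvanishing of $\xi$ enter; but the argument is the same.
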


This remarkable property distinguishes Sasakian manifolds from other contact metric manifolds and conforms nicely to their intuitive understanding as odd-dimensional analogues of K\"ahler manifolds.

\end{document}